\theoremstyle{thmstyleone}%
\newtheorem{thm}{Theorem}
\newtheorem{prop}[thm]{Proposition}
\newtheorem{lemma}[thm]{Lemma}
\theoremstyle{thmstyletwo}%
\newtheorem{rem}[thm]{Remark}%
\theoremstyle{thmstylethree}%
\newtheorem{defn}[thm]{Definition}%
\newcommand{\be}{\begin{equation}}
\newcommand{\ee}{\end{equation}}
\newcommand{\mc}[1]{\mathcal{#1}}
\newcommand{\mf}[1]{\mathfrak{#1}}
\def\A{{\cal A}} 
\def\a{{A}} 
\def\M{{\mathscr M}} 
\def\Reg{{\mathscr R}} 
\def\Cou{{\mathscr K}} 
\def\Pro{{\mathscr L}} 
\def\J{{\mathscr J}} 
\renewcommand{\S}{\mc{T}} 
\renewcommand{\P}{\mc{P}} 
\newcommand{\Out}{\mf{X}} 
\newcommand{\out}{s} 
\newcommand{\Instr}{I} 
\def\SmoothC{{\mathcal D}} 
\def\Smooth{{\mathcal C}^\infty} 
\newcommand{\spl}{\xi} 
\def\N{\mathbbm{N}} 
\newcommand{\R}{\mathbbm{R}} 
\renewcommand{\H}{\mc{H}} 
\newcommand{\C}{\mathbbm{C}} 
\newcommand{\mat}[1]{\left( \begin{matrix}#1\end{matrix}\right)}
\newcommand{\non}{\nonumber}
\newcommand{\defby}{:=}
\DeclareMathOperator{\supp}{supp}
\newcommand{\id}{\mathbbm{1}}
\newcommand{\pl}{(} 
\newcommand{\pr}{)}
\renewcommand{\orcidlogo}{%
  \includegraphics[width=10pt]{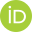}%
}
\renewcommand{\orcid}[1]{\href{https://orcid.org/#1}{\orcidlogo}}
\begin{document}

\title[Quantum Field Measurements in the Fewster-Verch Framework]{Quantum Field Measurements in the Fewster-Verch Framework}

\author*[1]{\fnm{Jan} \sur{Mandrysch} \orcid{0000-0003-4827-1806}}\email{\normalsize jan.mandrysch@oeaw.ac.at}

\author[1]{\fnm{Miguel} \sur{Navascu\'{e}s} \orcid{0000-0003-0717-3927}}

\affil[1]{\normalsize \orgdiv{Institute for Quantum Optics and Quantum Information (IQOQI) Vienna},\\ \orgname{Austrian Academy of Sciences}, \orgaddress{\street{Boltzmanngasse 3}, \postcode{1090} \city{Wien}, \country{Austria}}}


\abstract{The Fewster-Verch (FV) framework provides a local and covariant approach for defining measurements in quantum field theory (QFT). Within this framework, a probe QFT represents the measurement device, which, after interacting with the target QFT, undergoes an arbitrary local measurement. Remarkably, the FV framework is free from Sorkin-like causal paradoxes and robust enough to enable quantum state tomography. However, two open issues remain.

First, it is unclear if the FV framework allows conducting arbitrary local measurements. Second, if the probe field is interpreted as physical and the FV framework as fundamental, then one must demand the probe measurement to be itself implementable within the framework. That would involve a new probe, which should also be subject to an FV measurement, and so on. It is unknown if there exist non-trivial FV measurements for which such an ``FV-Heisenberg cut" can be moved arbitrarily far away. In this work, we advance the first problem by proving that Gaussian-modulated measurements of locally smeared fields fit within the FV framework. We solve the second problem by showing that any such measurement admits a movable FV-Heisenberg cut.

As a technical byproduct, we establish that state transformations induced by finite-rank perturbations of the classical phase space underlying a linear scalar field preserve the Hadamard property.}

\keywords{Quantum Field Theory, Measurement Theory, Local Measurement Scheme, Causal Measurement Scheme, Heisenberg Cut, Gaussian measurement}

\pacs[Mathematics Subject Classification]{81P15 $\cdot$ 81T05}

\maketitle

\newpage 

\section{Introduction}
Quantum Field Theory (QFT) has proven very successful in modeling particle scattering experiments at high energies. However, as noted by Sorkin \cite{Sor93}, the use of seemingly innocuous quantum operations within the QFT formalism would allow three or more separate parties to violate Einstein's causality. The accepted resolution of Sorkin's paradox is to acknowledge that the set of quantum operations which can be conducted within a finite region of space-time is smaller than previously envisaged. This raises the question of how to model local operations within QFT.

This problem has been greatly advanced in recent years. In \cite{Jub22}, Jubb studies which quantum channels (or update rules) do not lead to Sorkin-type violations of causality in QFT, a property he calls strong causality. Among other things, he concludes that weak ``Gaussian'' measurements of locally smeared fields are strongly causal. Oeckl, who calls this property causal transparency, reaches the same conclusion in \cite{Oec24}.

Independently, Fewster and Verch proposed in \cite{FV20,FV23} a general framework to model measurements within QFT. In a nutshell, the FV framework consists in making the `target' (or `system') field theory interact in a compact region with another `probe' field theory. Measurement outcomes are obtained by measuring the probe in an arbitrary bounded region outside the causal past of the interaction region. The probe is then discarded, i.e., one is not allowed to measure it a second time. Remarkably, all quantum operations achievable within the FV framework are strongly causal \cite{BFR21}. Moreover, they suffice to carry out full tomography when the target QFT is a scalar field \cite{FJR23}. However, as we argue later on, this is different from modeling actual measurements. In this regard, the FV framework presents at least two unresolved issues:

\begin{itemize}
    \item First, given the specification of a quantum measurement, or even of the corresponding quantum instrument, it is not straightforward to determine whether the FV framework allows implementing one or the other---even approximately. Doing so involves specifying a probe QFT, its initial state, the form of the interaction, and the measurement on the probe, or proving that no such elements exist.
    \item Second, if we regard the probe QFT as a physical entity and not as part of a mere mathematical construction to obtain a set of strongly causal operations, then it is unclear how come one is allowed to measure the probe arbitrarily; a standing assumption of the framework. Consistency therefore demands that the probe measurement itself be modeled within the FV framework. That would involve introducing another probe, whose measurement should be realizable within the FV framework, and so on. In other words, any physical measurement should be expressible as an arbitrarily long chain of consecutive interactions with independent probes, in such a way that we predict the same measurement statistics irrespective of where in the chain we invoke Born's rule.

    In non-relativistic quantum mechanics, the specific probe within a measurement chain where one applies Born's rule is known as ``Heisenberg cut". That the position of the cut can be chosen at will was already observed by von Neumann \cite{Von96}. What we have argued above is that, if the FV framework is fundamental and its probe field is meant to be physical, then feasible QFT measurements should have a similarly movable ``FV-Heisenberg cut". This leads us to ask: Are there non-trivial FV-realizable measurements with a movable FV-Heisenberg cut?
\end{itemize}

In this paper, we address the first problem by proving that Gaussian-modulated measurements of a locally smeared quantum field can be modeled within the FV framework. As it turns out, the probe measurement that induces a Gaussian-modulated field measurement can itself be chosen a Gaussian-modulated field measurement. This solves the second problem: Gaussian field measurements are consistent with the FV framework, even if we accept the physicality of the probe field. As a byproduct, we establish that projective field measurements can also be modeled within the FV framework.

The structure of this paper is as follows. In \Cref{sec:alg} we specify the linear QFT setting to which our results apply. In \Cref{sec:quantumops} we introduce how to model general as well as Gaussian-modulated measurements in QFT, distinguishing between positive-operator-valued measures (POVMs) and quantum instruments, and illustrate how this differs from state tomography. In \Cref{sec:fv} we summarize the FV framework and characterize the POVMs and instruments that are realizable within the framework in order to state our main results (\Cref{thm:asymptFV}). Finally, in \Cref{sec:impl} we prove our results by presenting a concrete scheme to induce Gaussian-modulated measurements, projective measurements, and more---even iteratively. We conclude in \Cref{sec:conclusion}. In Appendix~\ref{app:aqft} we review the construction of the Weyl CCR algebra of a linear scalar field. In Appendix~\ref{app:finrank}, we establish that state transformations induced by finite-rank perturbations of the classical phase space underlying a linear scalar field preserve the Hadamard property. In Appendix~\ref{app:gauss} we collect some proofs on (dephased) Gaussian instruments.

\section{Technical setting for quantum field theory}\label{sec:alg}

We will be dealing with the standard framework of AQFT: namely, starting from a fixed globally hyperbolic spacetime manifold $\M$, we associate to any region $\Reg\subset \M$ a unital *-algebra $\A(\Reg)$. These local algebras, which define our QFT, satisfy the usual axioms: isotony, microcausality, time-slice and Haag's property; confer \cite[Sec.~4]{FV23} and references therein for details. States of this QFT are linear, positive functionals $\omega:\A\to\C$ on the algebra of observables $\A \defby \A(\M)$ with $\omega(1)=1$. An element $A \in \A$ is said to be localizable in region $\Reg\subset \M$ if $\a\in\A(\Reg)$. To express limits, we assume $\A$ and all $\A(\Reg)$ are equipped and completed with respect to a suitable topology. Specifically, we work with von-Neumann algebras; for some fixed Hilbert space $\H$ these are *-subalgebras of $B(\H)$ closed under the strong-operator topology\footnote{..., or closed under the weak-operator-topology, a more common but equivalent definition of a von-Neumman algebra.}. Limits, infinite sums, and integrals pertaining to $\A$ will be taken with respect to this topology, with integrals understood as Bochner integrals in this topology. Note that in this setting, any density matrix, i.e., $\rho \in B(\H)$ such that $\rho \geq 0$ and $\operatorname{tr} \rho = 1$, gives rise to a state $\omega_\rho$ on $\A$ through $\omega_\rho(\a ) \defby \operatorname{tr} (\rho \a )$.

Our main study will be conducted in the setting of linear scalar QFT. In this setting the \emph{smeared field} $\Phi(f)$ corresponds to the quantization of
\begin{equation}
    \phi(f) = \int_\M \phi(x) f(x) \mathrm{vol}(x) =: \braket{\phi,f},
\end{equation}
a classical point field $\phi$ satisfying a classical linear differential equation $T\phi(x)= 0$ and smeared over spacetime through a test function $f$ using $\mathrm{vol}$, the volume form\footnote{Its dependence on the spacetime metric is suppressed from our notation for $\M$ since it is considered to be fixed throughout the article.} on $\M$. We assume that $T$ is symmetric with respect to $\braket{,}$ and normally hyperbolic, which implies it has well-defined retarded/advanced Green operators $E^\pm$ and associated causal propagator $E= E^--E^+$ \cite{BGP07}. This holds, e.g., for the Klein-Gordon operator $T = \square_\M - m^2$, $m > 0$. We denote the space of real-valued smooth compactly supported functions on $\M$ by $\SmoothC(\M)$ and by $\SmoothC(\Reg)$ if restricting to functions having support in $\Reg \subset \M$. With $\phi$ having spatially compact support, it is natural to identify $Tf \sim 0$ for all $f \in \SmoothC(\M)$ since $\phi(Tf) = \braket{T\phi,f} = 0$ by partial integration. We denote the resulting quotient space by $\mc{C}_T \defby \SmoothC(\M)/T\SmoothC(\M)$ and say that $f \in \SmoothC(\M)$ is \emph{localizable in }$\Reg \subset \M$ iff there exists $g \in \SmoothC(\Reg)$ such that $f \sim g$. In this case we denote $[f] \in \mc{C}_T(\Reg) \subset \mc{C}_T$. The \emph{classical timeslice property} \cite[Thm.~3.3.1]{BDFY15} entails that any $f\in \SmoothC(\M)$ is localizable in any globally hyperbolic neighbourhood $N$ of any Cauchy surface of $M$. We define the \emph{classical phase space} as the symplectic space $(\mc{C}_T,E)$, where $E$ denotes also the symplectic form $E([f],[g]) \defby \braket{f, Eg}$, $f,g \in \SmoothC(\M)$ , associated with the causal propagator. This concludes the classical part of the construction.

Quantization may then be performed in a fully algebraic setting which we allude to in Appendix~\ref{app:aqft}. For the main text, however, we are content with a fixed Hilbert space representation and make these assumptions: The \emph{smeared field} $\Phi$ is a map from $\mc{C}_T$ to self-adjoint operators on a Hilbert space $\H$ such that
\begin{align}
  \label{eq:cont}&\textbf{[Continuity]} \, && e^{it\Phi(f)} \text{ is strongly operator continuous in }t,\\
  \label{eq:weyl}&\textbf{[Weyl relation]} \, &&e^{i\Phi(f)} e^{i\Phi(g)} = e^{-\frac{i}{2} \braket{f,Eg}} e^{i\Phi(f+g)},
\end{align}
for all $f,g \in \SmoothC(\M)$, where we agree to $\Phi(f)\defby \Phi([f])$ by abuse of notation. Note here that the Weyl relation is a strong and covariant form of the canonical commutation relations. Finally, we obtain a net of von-Neumann algebras $\{ \A(\Reg)\}_\Reg \subset B(\H)$ labeled by open causally convex regions $\Reg \subset \M$ by defining $\A(\Reg)$ to be the algebra generated by the \emph{Weyl operators} $\{ e^{i\Phi(f)}: \, f \in \SmoothC(\Reg)\}$, equipped and completed with respect to the strong operator topology of $B(\H)$. These local algebras are known to satisfy the usual axioms as listed above. We will call this a \emph{linear scalar QFT} (induced by $T$) and denote it by $\A_T$.

We conclude this section by specifying suitable states and state transformations on $\A_T$. For QFTs (at least for linear ones) there is a broad class of physically reasonable states given by the quasi-free Hadamard states. There are many such states on any globally hyperbolic spacetime \cite{Ver94} and for stationary spacetimes this includes ground and thermal states \cite{San13}. There is also a broad class of state transformations: Any symplectic transformation $F$ of $(\mc{C}_T,E)$ induces an automorphism $\alpha_F$ of $\A_T$ by imposing $\alpha_F(e^{i\Phi(v)}) \defby e^{i\Phi(Fv)}$, $v \in \mc{C}_T$ and a new \emph{$F$-transformed state} $\omega_F$ via pullback, i.e.,
\begin{equation} \label{eq:transformedstate}
    \omega_F(\bullet) \defby \omega(\alpha_F(\bullet)).
\end{equation}
To deem a transformation of the form
\begin{equation} \label{eq:statetrafo}
    \omega \mapsto \omega_F
\end{equation}
as `physical', e.g., arising as a state-update for a localized measurement, one would at least require that it preserves the class of physically reasonable states. 

It is well known (confer, e.g., \cite[Prop.~29.3-2]{HR15}) that the class of quasi-free states is preserved by \emph{all} transformations which have the form \eqref{eq:statetrafo}. In Appendix~\ref{app:finrank} we prove that these preserve also the Hadamard class if $F$ transforms only a finite-dimensional subspace of $\mc{C}_T$. Although possibly known to experts, this result, to the best of our knowledge, has not been addressed in the existing literature and might be of independent interest. 

Since our main study does not require to work with Hadamard states, we will only define quasi-free states in more detail: A state $\omega$ on $\A_T$ is referred to as \emph{quasi-free} (sometimes also as ``Gaussian'') iff there exists a positive bilinear symmetric form $\Gamma:\mc{C}_T^{\times 2}\to \R$  such that 
\begin{equation}\label{eq:qfree}
\omega(e^{i\Phi(v)})=e^{-\frac{\Gamma(v,v)}{2}}, \quad v \in \mc{C}_T.
\end{equation}
Conversely, \eqref{eq:qfree} defines a state on $\A_T$ iff 
\begin{equation} \label{eq:uncertaintyrel}
    \Gamma(v,v) \Gamma(w,w) \geq \tfrac{1}{4} E(v,w)^2, \quad v,w \in \mc{C}_T.
\end{equation}
We refer to such $\Gamma$ as a \emph{covariance}. Given any symplectic transformation $F$ of $(\mc{C}_T,E)$, it is straightforward to check that the $F$-transformed state $\omega_F$ is also quasi-free with covariance
\begin{equation}
    \Gamma_F(v,w) \defby \Gamma(Fv,Fw), \quad v,w \in \mc{C}_T.
\end{equation}
A quasi-free state is called \emph{pure} iff \eqref{eq:uncertaintyrel} can be saturated, i.e., for any given nonzero $v \in \mc{C}_T$, 
\begin{equation}\label{eq:optuncertainty}
    \underset{0\neq w \in \mc{C}_T}{\sup}\, \frac{E(v,w)^2}{4 \Gamma(v,v)\Gamma(w,w)} = 1.
\end{equation}
Note that also \emph{pure} quasi-free Hadamard states exist in arbitrary globally hyperbolic spacetimes. In stationary spacetimes, these exist since, at least for a free scalar field with a positive stationary potential, i.e., where $T= \square_\M + V$ with stationary $0 < V \in C^\infty(\M)$, there is a unique (pure) ground state \cite{San13}. More generally, existence follows by a deformation argument\footnote{We thank Chris~Fewster for the helpful suggestion.}: For any globally hyperbolic spacetime $\M$, there is a deformation in the form of a certain time-orientation preserving diffeomorphism such that the deformed spacetime, say $\M'$, is static in the causal past of some Cauchy surface \cite{FNW81}. A quasi-free Hadamard state on $\M$ may then be constructed by taking suitable extensions and restrictions of a pure quasi-free Hadamard state on the static patch of $\M'$. That this state is also pure follows using the defining relation \eqref{eq:optuncertainty} and the classical timeslice property.

\newpage

\section{Quantum field measurements and instruments}\label{sec:quantumops}

We wish to model local operations within QFT, specifically measurements and associated instruments. In particular, we will use the technical setting from the previous section and fix some QFT with algebra of observables $\A$. Later on we illustrate these notions defining Gaussian operations in a linear scalar QFT.

Generally, a (quantum) measurement comes with a set of (classical) \emph{measurement outcomes} $\Out$ and associates with it a (classical) probability distribution which depends on the state of the (quantum) system---in our case a QFT---before the measurement. If we add a description of the state of the system after the measurement, we speak of a (quantum) instrument. To avoid technicalities, we treat the two paradigmatic cases of a \emph{finite-} and a \emph{continuous-}outcome measurement and refer to \cite{BLPY16} for generalizations. Moreover, as it is convenient, we define our instruments in the Heisenberg picture acting directly on $\A$---confer e.g. \cite{Wer01,Key02}---and provide the associated state changes in form of \emph{state update rules}.

To begin with, let $\Out$ be \emph{finite}. In this case, an \emph{instrument} is defined by a family of linear completely positive maps $\Instr \defby \{ \Instr_j \}_{j\in \Out}$, where $\Instr_j :\A\to\A$ is such that
\begin{equation}\label{eq:sum1}
    \bar{\Instr}(\a ) \defby \sum_j  \Instr_j (\a ) \quad\text{satisfies}\quad \bar{\Instr}(1) = 1.
\end{equation}
The map $\bar{\Instr}: \A \to \A$ is called the \emph{measurement channel}. Intuitively, if our QFT is initially in state $\omega$, the instrument $\Instr$ describes a measurement that returns an outcome $j$ with probability
\begin{equation}\label{prob:1}
p(j |\omega)=\omega(\Instr_j (1)),
\end{equation}
in which case the new state of the QFT is updated to
\begin{equation}
\bar{\omega}_j  \defby \frac{1}{\omega(\Instr_j (1))}\omega\circ\Instr_j .
\end{equation}
If we ignore or discard the measurement outcome $j$, then the new QFT state will be
\begin{equation}\label{eq:sum2}
\bar{\omega} \defby \sum_j p(j |\omega)\bar{\omega}_j =\omega\circ\bar{\Instr}.
\end{equation}

If we are just interested in computing the probabilities of the different outcomes, it suffices to work with the algebra elements $M \defby \{ M_j \defby I_j(1) \}_j$ so that $p(j |\omega)=\omega(M_j )$. These elements satisfy
\begin{equation} \label{eq:sum3}
  \sum_j  M_j =1 \quad \text{and}\quad \forall j : \, M_j \geq 0.
\end{equation}
Any such family $M$ is called a \emph{positive operator-valued measure} (POVM).

The formalism above can also be used to model \emph{continuous}-outcome measurements. In this case we suppose the set of outcomes $\Out \subset \R$ to be a Borel subset of the real line, e.g., an interval or the full real line, we label the outcomes with indices $\out$ instead of $j$ for distinction\footnote{In a few cases we also use $\alpha$ as index if we simultaneously treat finite and continuous outcomes.}and we replace $\sum_{j\in\Out} $ by $\int_{\Out} \bullet \,d\out $ whenever it appears---in particular within \eqref{eq:sum1}, \eqref{eq:sum2} and \eqref{eq:sum3}. Further, we require that $\Instr_\out (\a)$ is integrable in $\out $ for all $\a \in \A$. Here, "integrable" refers to Bochner integrability with respect to the strong-operator topology (confer Appendix~\ref{app:gauss} for a definition). Note that this assumes more than strictly necessary to keep the statistical interpretation from above, Eqs.~\eqref{prob:1}--\eqref{eq:sum2}, which relies only on the existence of limits within states; for example, the weak-operator topology would be sufficient in this regard. However, our choice of topology implies the existence of the measurement channel, $\overline{\Instr}(\a) \defby \int_{\Out} \Instr_\out (\a) d\out $ as a completely positive map within $\A$ and we will establish this property for all given examples.

\subsection{Gaussian-modulated measurements}
Now, we are ready to discuss concrete examples of instruments and POVMs. In order to do this, we take $\A$ to be a linear scalar QFT as specified in the preceding section (dropping the index $T$) and denote the associated quantized field by $\Phi$. We remark that expressions of the form $\mu(\Phi(f))$ for any bounded Borel function $\mu:\R \to \mathbbm{C}$, by means of the spectral theorem, define elements of $\A \subset B(\H)$ in terms of (Borel) functional calculus. From this perspective, most statements made in the remainder of this section are direct consequences of the properties of the functions $\mu$, and we defer proofs to Appendix~\ref{app:gauss}. 

To start with, let $f\in \SmoothC(\M)$ and $\epsilon>0$, and consider the following continuous-outcome POVM on $\A$,
\begin{equation}
M^{f,\epsilon} \defby \{ M_\out^{f,\epsilon}\}_{\out \in\R}, \quad M_\out^{f,\epsilon} \defby \frac{1}{\sqrt{2\pi}\epsilon} e^{-\frac{(\Phi(f)-\out )^2}{2\epsilon^2}}.
\label{gaussian_POVM}
\end{equation}
We will call this POVM a \emph{Gaussian-modulated measurement of the field $\Phi(f)$}, or Gaussian measurement, for short. It is a special instance of a weak measurement \cite{weak_meas}. Moreover, it approximates a projective measurement of $\Phi(f)$ in the limit $\epsilon \to 0$, i.e., for any Borel set $B \subset \R$ and with $\Pi_B(A)$ denoting the spectral projection of a self-adjoint operator $A$ associated with the spectral subset $B$,
\begin{equation} \label{eq:gaussPOVMlimit}
  \Pi_B(\Phi(f))=\underset{\epsilon\to 0}{\lim} \, \int_B M_\out^{f,\epsilon} d\out .
\end{equation}

Note that there exist many different instruments implementing the Gaussian POVM~\eqref{gaussian_POVM}. One of them is:
\begin{equation}
\Instr^{f,\epsilon}_\out (\bullet) \defby \frac{1}{\sqrt{2\pi}\epsilon} e^{-\frac{(\Phi(f)-\out )^2}{4\epsilon^2}} \bullet e^{-\frac{(\Phi(f)-\out )^2}{4\epsilon^2}}.
\label{gaussian_instrument_basic}
\end{equation} 
This instrument appears in the literature of quantum optics and quantum foundations \cite{GRW86,GI02}. Its causality properties have been recently studied in QFT, see \cite{Jub22} and \cite{Oec24}.

Now, for any $\delta>0$, define the \emph{dephasing map}
\begin{equation}\label{eq:dephasing}
D^{f,\delta}(\bullet) \defby \frac{1}{\sqrt{2\pi}\delta}\int e^{-\frac{\nu^2}{2\delta^2}}e^{-i\nu\Phi(f)}\bullet e^{i\nu\Phi(f)}  d\nu.
\end{equation}
This map is completely positive and unit preserving. Intuitively, it introduces decoherence between the generalized eigenstates of $\Phi(f)$.

It is easy to verify that the instrument
\begin{equation}
\Instr^{f,\epsilon,\delta} \defby \{ \Instr^{f,\epsilon,\delta}_\out  \}_{\out \in\R}, \quad \Instr^{f,\epsilon,\delta}_\out  \defby \Instr_\out ^{f,\epsilon} \circ D^{f,\delta} =D^{f,\delta}\circ \Instr_\out ^{f,\epsilon}
\label{eq:dephasedgauss_instr}
\end{equation}
also induces the POVM $M^{f,\epsilon}$ defined in (\ref{gaussian_POVM}) for any $\delta > 0$. This example will have a prominent appearance in our measurement scheme and illustrates that different instruments might induce the same POVM.

The instrument $\Instr^{f,\epsilon,\delta}$ is fully characterized by its action on Weyl operators. In this regard, by a straightforward computation it follows that
\begin{equation}
  \Instr^{f,\epsilon,\delta}_\out (e^{i\Phi(h)})=\frac{1}{\sqrt{2\pi}\epsilon} e^{-\frac{\langle f,Eh\rangle^2}{8\underline{\epsilon}(\epsilon, \delta)^2}}e^{i\Phi(h)}e^{-\frac{(\Phi(f)-\out -\frac{\langle f,Eh \rangle}{2})^2}{2\epsilon^2}}, \quad \underline{\epsilon}(\epsilon, \delta)^2 \defby \frac{1}{4\delta^2+\frac{1}{\epsilon^2}}
\label{action_gauss_instr}
\end{equation}
for arbitrary $h \in \SmoothC(\M)$. Integrating over $\out $, we arrive at an expression for the measurement channel of instrument \eqref{eq:dephasedgauss_instr}, namely:
\begin{equation}
\bar{\Instr}^{f,\epsilon,\delta}(e^{i\Phi(h)})=e^{-\frac{\langle f,Eh\rangle^2}{8\underline{\epsilon}(\epsilon, \delta)^2}} e^{i\Phi(h)}.
\end{equation}
Hence, the instruments $\Instr^{f,\epsilon,\delta}$ and $\Instr^{f,\underline{\epsilon}(\epsilon,\delta),0}$, despite giving rise to different POVMs, share the same measurement channel; a measurement channel which is causal, as has been verified in \cite{Jub22} and \cite{Oec24}.

\subsection{State tomography}
We conclude this section with some words on tomography. Any family of POVMs $\{M^l\}_l$ with elements spanning the Hermitian part of $\A$ allows one to completely fix the underlying state $\omega$. That is, for any other state $\omega'$ on $\A$, the condition
\begin{equation}
\forall \out ,l: \quad \omega(M^l_\out )=\omega'(M^l_\out )
\end{equation}
implies that $\omega=\omega'$. In that case, we say that the family $\{M^l\}_l$ is \emph{tomographically complete}. 

Being able to implement a tomographically complete family of POVMs does not necessarily mean being able to implement, even approximately, any possible POVM. For instance, suppose that, for fixed $\epsilon>0$, we can measure all POVMs of the form $\{M^{f,\epsilon} :\, f\in \SmoothC(\M),\,\langle Ef,\alpha Ef\rangle=1\}$, for some positive $\alpha\in L^1(\M)$. Now, take $f\in \SmoothC(\M)$, with $\langle Ef,\alpha Ef\rangle=1$. Suppose for an ensemble of independent preparations of a quasi-free state $\omega$ that we measure $M^{f,\epsilon}$ and, given the result $\out $, compute $e^{i\lambda \out }$, $\lambda \in \R$. Then, the resulting random variable $\beta(f,\lambda)$ has the expectation value
\begin{equation}
\langle \beta(f,\lambda)\rangle =\frac{1}{\sqrt{2\pi}\epsilon}\int \omega\left( e^{-\frac{(\Phi(f)-\out )^2}{2\epsilon^2}}\right)e^{i\lambda \out } d\out =e^{-\frac{\epsilon^2\lambda^2}{2}}\omega\left( e^{i\lambda \Phi(f)}\right).
\end{equation}
Since the linear span of all Weyl operators yields a dense subset of $\A$, it follows that the given family of POVMs is tomographically complete. Now, consider the spectral projectors 
\begin{equation} \label{eq:proj1}
\Pi_0= \Pi_{[-1,1]}(\Phi(f)),\quad \Pi_1=1-\Pi_0,
\end{equation}
which form the POVM $\Pi:=\{\Pi_0,\Pi_1\}$. Then we have that
\begin{equation}
\frac{1}{\pi} \lim_{\Lambda \to \infty} \, \int_{-\Lambda}^{\Lambda} \mbox{sinc}\left(\lambda\right) e^{\frac{\epsilon^2\lambda^2}{2}}\langle \beta(f,\lambda)\rangle d\lambda=\omega\left(\Pi_0\right).
\end{equation}
That is, given the statistics obtained by measuring independent preparations of $\omega$ with $M^{f,\epsilon}$, one can estimate the probability with which one would have observed outcome $0$ (or $1$) had we measured $\Pi$.

This is, however, not the same as measuring the POVM $\Pi$. In fact, by measuring $\Pi$, one can distinguish with certainty if the system was prepared in state $\omega=\omega_\rho$ with $\rho=\ket{\psi_0}\bra{\psi_0}$ or $\rho=\ket{\psi_1}\bra{\psi_1}$ for any two vectors $\ket{\psi_0},\ket{\psi_1}$ with $\Pi_j\ket{\psi_j}=\ket{\psi_j}$. In contrast, by measuring $M^{f, \epsilon}$, one cannot tell with certainty which of the two states was prepared. Depending on the value of $\epsilon$ one would need to measure many independent preparations of $\omega$ with $M^{f, \epsilon}$ to reach a statistically significant conclusion.

\section{The FV framework and its ``Heisenberg cut''}\label{sec:fv}

It is tempting to think that any instrument of the form $\Instr_j (\bullet) \defby \sum_i \a_{i,j }\bullet \a^*_{i,j }$ for some (say finite) family $\{\a_{i,j }\}_{i,j }\subset\A(\Reg)$ with $\sum_{i,j } \a_{i,j} \a^*_{i,j }=1$ can be physically implemented by acting only on region $\Reg$ or, at most, its causal hull. However, as shown by Sorkin \cite{Sor93}, doing so leads to contradictions with Einstein's causality in scenarios with three or more separate experimenters. For instance $\Instr_j(\bullet) \defby \Pi_j \bullet \Pi_j$ with POVM $\Pi_j$ as defined in \eqref{eq:proj1} would be acausal. Since Sorkin-like scenarios occur also more widely---recently an example for classical field theory has been given in \cite{MV23}---the issue is not to be attributed to quantum theory or QFT per sé but rather to an insufficient characterization of local instruments which do not clash with causality.

Recently, Fewster and Verch proposed a set of quantum instruments to model measurements \cite{FV20,FV23} which was subsequently found to be strongly causal \cite{BFR21}, avoiding any Sorkin-like paradoxes. Their main idea was to model the measurement of a `target' QFT observable\footnote{In the original article the term `system' is used in place of `target'.} by making it interact with a `probe' QFT in a region $\Cou$. The probe field is measured using a POVM with elements localized in some processing region $\Pro$ strictly separated from the past of $\Cou$. After discarding (`tracing out') the probe field, the resulting quantum instrument induced in the target theory is shown to be localizable in any causally complete connected region strictly containing $\Cou$. 

More formally, call $\S$ the target QFT and $\P$ the probe QFT, which we assume to be in state $\sigma$. The coupling between system and probe is specified in terms of an automorphism $\Theta$ acting on the uncoupled theory $\S\otimes \mc{P}$ implementing the interaction---referred to as the \emph{scattering morphism}. A key assumption is that the \emph{coupling region} $\Cou \subset \M$ associated with $\Theta$ is compact, both in space and time, which allows the identification of the coupled theory with the uncoupled one, $\S\otimes \mc{P}$, in the so-called ``in'' and ``out'' regions $\Cou^\pm \defby \M \setminus \J^{\mp}(\Cou)$, where $\J^\pm(\Reg)$ denotes the causal future/past of a region $\Reg \subset \M$. The \emph{processing region} $\Pro$ is assumed to be precompact such that $\overline{\Pro}\subset \Cou^+$.

If we now implement a POVM $M \subset \mc{P}(\Pro)$ on the probe and then discard it, this \emph{induces} an instrument $\Instr$ and a POVM $\Instr(1)$ on the target theory $\S$. Defining $\eta_\sigma : \S\otimes \mc{P} \to \S$ by linear and continuous extension of $A \otimes B \mapsto \sigma(B) A$ to ``trace out'' the probe degrees of freedom, the instrument $\Instr$ is given by linear maps $\Instr_\alpha :\S\to \S$,
\begin{equation}
\Instr_\alpha (\a) \defby \eta_{\sigma}(\Theta(\a\otimes M_\alpha )), \quad  \a\in \S,\quad \alpha \in \Out
\label{instr_FV}
\end{equation}
where it is easy to verify that these are completely positive and $\bar{\Instr}(1)=1$. Moreover, the induced POVM $\Instr (1)$ can be localized in any causally complete connected region containing $\Cou$.

A problem with the FV framework is that the set of allowed quantum operations is defined implicitly: Namely, any instrument of the form \eqref{instr_FV} is \emph{FV-realizable}. However, given an instrument $\Instr$ on the target theory, the problem of deciding whether $\Instr$ admits an FV representation is far from trivial: It amounts to determining if there exist a probe theory $\mc{P}$, a scattering morphism $\Theta$, a probe state $\sigma$ and a POVM $M$ such that \eqref{instr_FV} holds---this we call a \emph{measurement scheme} for $\Instr$. The problem of determining if a given POVM $M$ is realizable within the FV framework is similarly challenging. In \cite{FJR23}, Fewster, Jubb and Ruep prove that, for the case of a free scalar field, the set of FV-realizable instruments is rich enough to carry out state tomography: namely, it allows one to estimate all parameters defining the QFT state of the target theory. As illustrated at the end of the previous section, this is not the same as proving that any POVM can be realized within the FV framework. This begs the question: Can we characterize a class of relevant quantum instruments that are FV-realizable?

This question extends even further, if we regard the probe in the FV framework, not as a mere mathematical artifact to arrive at a well-behaved set of local QFT operations but as a (perhaps, effective) QFT which is at the root of any measurement we currently conduct in the lab. In that second case, the measurement of the probe should likewise be realizable within the FV framework. That would require introducing another probe, which in turn should be measured with another probe, and so on.

Now, let us call $\mathbbm{M}_1$ the set of POVMs that can be realized within the FV framework (i.e., of the form $\Instr(1)$ as given in \eqref{instr_FV} for some choice of a measurement scheme) and define the sets of POVMs $\{\mathbbm{M}_n\}_n$ by induction: A POVM is in $\mathbbm{M}_{n}$ if it admits an FV realization such that the probe is subject to a measurement in $\mathbbm{M}_{n-1}$. Analogously, we define $\mathbbm{I}_{n}$ as the set of instruments that admit an FV realization such that the probe is subject to a measurement in $\mathbbm{M}_{n-1}$. 

Following the recursion, we find that an instrument $\Instr$ is in $\mathbbm{I}_n$, resp., $I(1) \in \mathbbm{M}_n$ iff there is a sequence of measurement schemes $((\P_j,\Theta_j,\sigma_j,M^j))_{j=1}^n$ such that the $j^\text{th}$ scheme induces $M^{j-1}$ for $j > 1$ and the first scheme induces $\Instr$, resp., $\Instr(1)$. This we call a \emph{measurement chain}. By definition, the coupling regions $\Cou_j$ associated with $\Theta_j$ obey $\Cou_{j} \subset \Cou_{j-1}^+$ so that we may interpret the measurement chain also as a single measurement scheme $(\P,\Theta,\sigma,M)$ consisting of a ``super probe'' theory $\P = \P_1 \otimes \ldots \otimes \P_n$ in the probe state $\sigma$, the product state formed from $\sigma_1$,..,$\sigma_n$, a probe POVM $M = \id_{\P_1\otimes ..\otimes \P_{n-1}} \otimes M^n$ and a scattering morphism $\Theta = \hat{\Theta}_1 \circ \ldots \circ \hat{\Theta}_n$, where $\hat{\Theta}_j$ denotes the scattering morphism which acts as $\Theta_j$ on $\P_{j-1}\otimes \P_j$ and trivially on all other factors. The scheme $(\P,\Theta,\sigma,M)$ also induces $\Instr$, resp., $\Instr(1)$, and using $\iota_j \defby \eta_{\sigma_j} \circ \Theta_j$ we obtain\footnote{This identity is straightforward to show based on the factorization of the super probe state and its scattering morphism. It is analogous to the proof of \cite[Thm.~3.5]{FV20}.} the identity
\begin{align}\label{eq:chain}
    \Instr_\alpha (A) &= \eta_{\sigma}(\Theta(A \otimes M_\alpha)) \\
    &= \iota_1(A \otimes \iota_2(\id \otimes \ldots \iota_n(\id \otimes M^n_\alpha))), \quad A \in \S.
\end{align}

One wonders if there exist non-trivial POVMs $M$ or instruments $\Instr$---namely, with informative measurement outcomes---such that $M\in \mathbbm{M}_n$ or $\Instr \in \mathbbm{I}_n$ for all $n$. 

Since some quantum operations seem only reachable as limits of other FV-realizable quantum operations, it will be convenient to work with an asymptotic notion of the sets $\{\mathbbm{M}_n\}_n$ and $\{\mathbbm{I}_n\}_n$. A POVM $M$ on a QFT $\A$ belongs to $\overline{\mathbbm{M}}_n$ if there exists a sequence of POVMs $(M^k)_k \subset \mathbbm{M}_n$ such that $\lim_{k\to\infty} M_\alpha ^k =M_\alpha $ for all $\alpha \in \Out$. Analogously, an instrument $\Instr$ belongs to $\overline{\mathbbm{I}}_n$ if there exists a sequence of instruments $(\Instr^k)_k \subset \mathbbm{I}_n$ such that $\lim_{k\to\infty} \Instr^k_\alpha (\a) =\Instr_\alpha (\a)$ for all $\a \in \A$ and $\alpha \in \Out$.

In the next section, for linear scalar QFT, we will prove that Gaussian-modulated local field measurements and their associated Gaussian instruments are \emph{asymptotically FV-realizable}, i.e., they are elements of $\overline{\mathbbm{M}}_1$ and $\overline{\mathbbm{I}}_1$, respectively. Remarkably, the required probe measurement is also a Gaussian field measurement. Iterating our scheme, we then establish that Gaussian field measurements \emph{admit a movable FV-Heisenberg cut}, i.e., they are elements of $\overline{\mathbbm{M}}_n$ and $\overline{\mathbbm{I}}_n$, respectively, for all $n \in \mathbbm{N}$. Our precise results are as follows.

\vspace{1em}

\begin{thm}\label{thm:asymptFV}
    Suppose a QFT $\S$ induced by $T$ as described in \Cref{sec:alg} is given, and consider its Gaussian-modulated field measurements $M^{f,\epsilon}$ as given in \eqref{gaussian_POVM} and associated dephased Gaussian-modulated instruments $\Instr^{f,\epsilon,\delta}$ as given in \eqref{eq:dephasedgauss_instr}.
    Then, $M^{f,\epsilon} \in \overline{\mathbbm{M}}_n$ and $\Instr^{f,\epsilon,\delta} \in \overline{\mathbbm{I}}_n$ for all $n \in \N$, $f \in \SmoothC(\M)$ as well as arbitrary $\epsilon,\delta > 0$.
\end{thm}
\vspace{1em}
In words, all Gaussian-modulated measurements and all dephased Gaussian-modulated instruments of linear scalar fields are asymptotically FV-realizable and admit a movable FV-Heisenberg cut.

\newpage

\section{Implementation of Gaussian measurements\\ within the FV framework}\label{sec:impl}

In this section, we prove the results listed in \Cref{thm:asymptFV}. We consider two linear scalar QFTs, a target $\S$ and a probe $\mc{P}$, induced by, for simplicity, the same classical equation of motion $T\phi(x) = 0$, respectively, $T\psi(x)=0$, as described in \Cref{sec:quantumops}. We denote the associated quantized fields by $\Phi$ for the target and $\Psi$ for the probe. To implement the POVM $M^{f,\epsilon}$ and the instrument $\Instr^{f,\epsilon,\delta}$ in $\S$, corresponding to a Gaussian measurement of the target field mode $f$, we will use weak linear interactions and carefully chosen squeezed probe states. Our principal aim here is to define a measurement scheme $(\Theta_\lambda,\sigma_\lambda,M^\lambda)$ consisting of an interaction, a probe state, and a probe POVM, respectively, which induces $M^{f,\epsilon}$ and $\Instr^{f,\epsilon,\delta}$ in the limit $\lambda \to 0$.

\paragraph{Interaction:} 

We first use ideas from the tomographic asymptotic scheme presented in \cite{FJR23}, which solves the ``classical part'' of the scheme. There, it is shown that: For all $f \in \SmoothC(\Reg)$ localizable in a precompact region $\Reg \subset \M$ and all precompact regions $\Pro\subset \overline{\Reg}^+ = \M \setminus \J^-(\overline{\Reg})$ with $\Reg \subset D^-(\Pro)$, the past domain of dependence of $\Pro$, we find a region $\Cou \subset \Reg$, $g\in \SmoothC(\Pro)$ and $\beta \in \SmoothC(\Cou)$ such that
    \begin{equation}\label{eq:classscheme}
    f=-\beta E^-g.
    \end{equation} 
We fix $\Reg$ (localization region), $\Cou$ (coupling region), $\Pro$ (processing region) with a schematic setup given in \Cref{fig:scheme}, as well as $f$ (target field mode), $g$ (probe field mode) and $\beta$ as given above. For the following, we exclude the trivial case $[f] = 0$, which excludes also $[g]=0$: If we had $[g] = 0$ then $g = Tg'$ for some $g' \in \SmoothC(\M)$. However, by the previous assumptions, $f = -\beta E^- g$ as well as $\supp \beta \subset \Cou$ and $\supp g \subset \Pro$ so that $f = -\beta (E+E^+) T g' = 0$ since $E T \SmoothC(\M) = 0$ and $\supp \beta E^+ T g' \subset \supp \beta \cap \J^+(\supp Tg') = \Cou \cap \J^+(\Pro) = \emptyset$.

\begin{figure}[h]
    \centering
    \includegraphics[width=.7\textwidth]{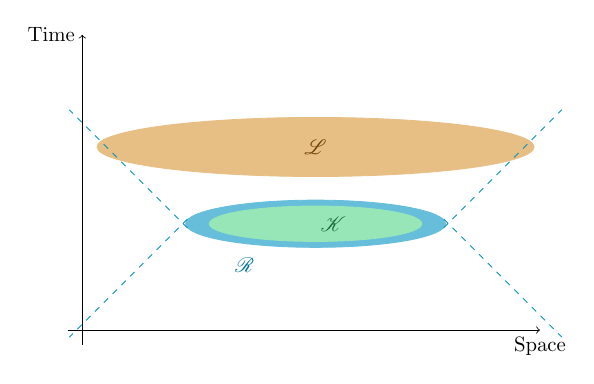}
    \caption{Schematic setup for the spacetime regions involved in the measurement scheme.}
    \label{fig:scheme}
\end{figure}

Following Fewster and Verch \cite[Sec.~4]{FV20}, we then define a linear and local interaction of strength $\lambda \beta$ between the probe and the target field where $\lambda$ is a real parameter. More precisely, we consider the interaction term
\begin{equation}
\lambda\int_\Cou \mathrm{vol}(x) \, \beta(x) \phi(x) \psi(x).
\end{equation}
This gives rise to a coupled equation of motion for the combined theory,
\begin{equation}
  T_\lambda \mat{\phi \\ \psi } = 0, \quad T_\lambda \defby \mat{T & \lambda \beta \\ \lambda\beta & T},
\end{equation}
which has well-defined Green operators for arbitrary real $\lambda$.

At the QFT level, this interaction generates a scattering morphism of the form
\begin{equation}
\Theta_\lambda\left(e^{i\Phi(u)} \otimes e^{i\Psi(v)}\right)= e^{i\Phi(u_\lambda)}\otimes e^{i\Psi(v_\lambda)}, \quad u,v \in \SmoothC(\M),
\label{def_theta_lambda}
\end{equation}
and, if we suppose that $u,v\in \SmoothC(\Cou^+)$, we have the explicit formula
\begin{equation}\label{eq:thetaexpl}
    \left(\begin{array}{c}u_\lambda\\v_\lambda\end{array}\right)=\theta_\lambda \left(\begin{array}{c}u\\v\end{array}\right), \quad \theta_\lambda \defby \left(\id-\lambda \left(\begin{array}{cc}0&\beta\\\beta&0\end{array}\right)E^-_{\lambda} \right),
\end{equation}
where $E^-_{\lambda}$ is the retarded Green operator associated with $T_\lambda$. For all such $u,v$, one finds from perturbation theory that
\begin{equation}
\left(\begin{array}{c}u_\lambda\\v_\lambda\end{array}\right)=\left(\begin{array}{c}u-\lambda\beta E^-v+O(\lambda^2)\\v-\lambda\beta E^-u+O(\lambda^2)\end{array}\right),
\label{pert_theory}
\end{equation}
where $E^-$ is the retarded Green operator associated with $T$ and the approximation holds in the standard test function topology.

For later use, considering arbitrary $h \in \SmoothC(\Pro)$, we define the functions $f_\lambda,g_\lambda, h_\lambda,p_\lambda$ through the identities:
\begin{align}
&\left(\begin{array}{c}\lambda f_\lambda\\ g_\lambda\end{array}\right)=\theta_\lambda\left(\begin{array}{c}0\\g\end{array}\right),\quad \left(\begin{array}{c} h_\lambda\\ \lambda p_\lambda\end{array}\right)=\theta_\lambda \left(\begin{array}{c}h\\0\end{array}\right).
\label{def_funcs}
\end{align}
By \eqref{pert_theory} we have that $\lim_{\lambda\to 0} g_\lambda=g$, $\lim_{\lambda\to 0} h_\lambda=h$ and $\lim_{\lambda\to 0} f_\lambda=-\beta E^- g = f$. Thus, by analogy we define
\begin{equation}
p:=\lim_{\lambda\to 0}p_\lambda=-\beta E^-h.
\label{def_p}
\end{equation}
We remark that
\begin{equation} \label{eq:symplcons}
    \braket{p,Eg} = \braket{f, Eh},
\end{equation}
 since
\begin{align}
&\langle p, E g\rangle=\langle p, (E^--E^+) g\rangle=\langle p, E^- g\rangle\nonumber\\
&=-\langle \beta E^-h, E^- g\rangle=-\langle E^-h, \beta E^- g\rangle=\langle E^-h, f\rangle=\langle Eh, f\rangle=\langle f, E h\rangle,
\end{align}
where we used in the first line that $\supp p \subset \Cou$ while $\supp E^+ g \subset \J^+(\supp g) = \J^+(\Pro)$ such that $\braket{p,E^+ g} = 0$ and in the second line that $\braket{f,E^+ h} = 0$ by an analogous argument.

\paragraph{Probe state:} Next, let us choose a family of initial states for the probe. They all arise starting from a fixed but arbitrary quasi-free state on $\P$ and applying a $\lambda$-dependent `squeezing' transformation. We define the involved transformations first abstractly:

Associated with a fixed mode pair $(q,\bar{q}) \in \mc{C}_T^2$ which is canonically conjugate, i.e., $E(q,\bar{q}) = 1$, we define the \emph{symplectic flip} $R_{q,\bar{q}}$ and, for nonzero $\mu \in \R$, the \emph{squeezing transformation} $F_{q,\bar{q},\mu}$ by
\begin{equation} \label{eq:symplflip}
    R_{q,\bar{q}} v \defby E(v,\bar{q})\bar{q} + E(v,q)q + v^{\perp},
\end{equation}
and
\begin{equation} \label{eq:symplsqueeze}
    F_{q,\bar{q},\mu} v \defby \mu E(v,\bar{q}) q - \mu^{-1} E(v,q) \bar{q} + v^{\perp},
\end{equation}
where 
\begin{equation}\label{eq:orth}
    v^{\perp} \defby v - E(v,\bar{q}) q + E(v,q) \bar{q}.
\end{equation}
These are evidently symplectic transformations characterized on the linear hull of $\{ q,\bar{q}\}$ by mapping $q \mapsto \bar{q}$, $\bar{q} \mapsto -q$, respectively, $q \mapsto \mu q$, $\bar{q} \mapsto \mu^{-1}\bar{q}$ and by acting trivially on the symplectic complement. Thus, given a state $\sigma$ and a canonically conjugate mode pair, we obtain a \emph{flipped state} $\tilde{\sigma}$ and a family $\sigma_\mu$ of $\mu$-\emph{squeezed states} by applying $R_{q,\bar{q}}$, respectively, $F_{q,\bar{q},\mu}$ to $\sigma$; confer \eqref{eq:transformedstate}. If $\sigma$ is quasi-free, then so will be $\tilde{\sigma},\sigma_\mu$, in this case their covariances $\tilde{\Gamma}$ and $\Gamma_\mu$ will satisfy
\begin{equation}
    \tilde{\Gamma}(q,q) = \Gamma(\bar{q},\bar{q}), \quad \tilde{\Gamma}(q,q) = \Gamma(\bar{q},\bar{q}), \quad \tilde{\Gamma}(q,\bar{q}) = - \Gamma(q,\bar{q})
\end{equation}
and
\begin{equation}
    \Gamma_\mu(q,q) = \mu^2\Gamma(q,q), \quad \Gamma_\mu(\bar{q},\bar{q}) = \mu^{-2}\Gamma(\bar{q},\bar{q}), \quad \Gamma_\mu(q,\bar{q}) = \Gamma(q,\bar{q}),
\end{equation} 
while 
\begin{equation}
    \tilde{\Gamma}(r,s)=\Gamma_\mu(r,s) = \Gamma(r,s), 
\end{equation} 
for all $r, s$ with 
\begin{equation}
E(r,q)=E(s,q)=E(r,\bar{q})=E(s,\bar{q})=0.    
\end{equation}

Now, let $g \in \SmoothC(\Pro)$ be as fixed earlier and $\bar{g} \in \SmoothC(\Pro)$ be a canonical conjugate to $g$\footnote{Note that $\Pro$ defines a globally hyperbolic spacetime in its own right implying that $\mc{C}_T(\Pro)$ has the properties listed for $\mc{C}_T = \mc{C}_T(\M)$ as listed in \Cref{sec:alg}. In particular, for any $g\in\SmoothC(\Pro)$ such that $[g] \neq 0$ within $\mc{C}_T$ and thus within $\mc{C}_T(\Pro)$, there always exists a canonically conjugate $\bar{g}\in \SmoothC(\Pro)$. Otherwise, we would have $\langle h, E g\rangle=0$, for all $h\in \SmoothC(\Pro)$ implying $Eg = 0$ and thus $g=Tg'$, for some $g'\in \SmoothC(\Pro)$, contradicting our initial assumption.}, so we have $E([g],[\bar{g}])=\braket{g,E\bar{g}}=1$. For our protocol, we will need a quasi-free state $\sigma$ with covariance $\Gamma$ satisfying the conditions:
\begin{equation}\label{tecreq}
\Gamma([g],[\bar{g}])=0, \quad \Gamma([g],[g])=\Gamma([\bar{g}],[\bar{g}])=:c^2,
\end{equation}
This is easily achieved: let $\sigma_0$ be an arbitrary quasi-free state on $\mc{P}$ with covariance $\Gamma_0$ and denote its flipped state (with respect to $[g]$ and $[\bar{g}]$) by $\tilde{\sigma}_0$ with covariance $\tilde{\Gamma}_0$. Then $\sigma$ can be defined as the quasi-free state with covariance $\Gamma \defby \frac{1}{2} (\Gamma_0 + \tilde{\Gamma}_0)$, which is well-defined and satisfies the requirement \eqref{tecreq} by inspection. In other words, we define $\sigma$ as the liberation\footnote{For each state there is a quasi-free state with identical covariance which is commonly referred to as the \emph{liberation} of the state; see e.g. \cite{Kay93}.} of the uniform mixture $\frac{1}{2}(\sigma_0+\tilde{\sigma}_0)$. 

Next, let $g_\lambda$ be as defined in \eqref{def_funcs}, then for each $\lambda \in [0,\lambda_0]$ for some $\lambda_0 > 0$, we choose a canonical conjugate $\bar{g}_\lambda \in \SmoothC(\Pro)$ to $g_\lambda$ such that 
\begin{equation} \label{eq:cconj}
    \underset{\lambda \to 0}{\lim} \bar{g}_\lambda = \bar{g}.
\end{equation}
This is always possible, e.g., by setting $\bar{g}_\lambda \defby (\braket{g_\lambda, E \bar{g}})^{-1} \bar{g}$, which is well-defined for sufficiently small $\lambda_0$ since $\braket{g_\lambda, E\bar{g}}$ is real and converges to 1 in the limit $\lambda \to 0$. Finally, we define $\sigma_\lambda$ as the $\mu\lambda-$squeezed state obtained from $\sigma$ with respect to the mode pair $([g_\lambda],[\bar{g}_\lambda])$,
\begin{equation} \label{def_sigma_lambda}
    \sigma_\lambda \defby \sigma_{F_{[g_\lambda],[\bar{g}_\lambda],\mu \lambda }},
\end{equation}
where we take $\lambda \in [0, \lambda_0]$ as a standing assumption and reserve a constant $\mu > 0$ for later optimization. 

Finally, note that if $\sigma$ is chosen to be Hadamard, then $\sigma_\lambda$ is Hadamard. This follows, since the Hadamard property is preserved under finite-rank perturbations (\Cref{prop:finhadamard}). Likewise, if $\sigma_0$ is chosen to be Hadamard, then also $\sigma$ is Hadamard, since, as can easily be verified, the Hadamard property is preserved under liberation and taking convex mixtures. Thus, our probe state preparation preserves the class of physically reasonable states: Starting with an arbitrary quasi-free Hadamard state, all probe states involved will have this property.

\paragraph{Probe measurement:} We have provided an interaction and a probe state. The last element we need to specify in the FV measurement scheme is the POVM to be implemented on the probe. We choose it to be Gaussian and dependent on $\lambda$. Namely, for some $\varepsilon > 0$,
\begin{equation}
M_\out ^{\lambda} \defby M_\out^{\lambda^{-1}g,\varepsilon} =\frac{1}{\sqrt{2\pi}\varepsilon} e^{-\frac{\left(\frac{\Psi(g)}{\lambda}-\out  \right)^2}{2\varepsilon^2}}.
\label{probe_POVM}
\end{equation}

\paragraph{Implementing the measurement scheme:}
We are ready to formulate our first result which establishes that Gaussian measurements admit an asymptotic FV-realization.
\begin{prop}\label{lemma_gaussian}
    Given a QFT $\S$ induced by $T$ as described in \Cref{sec:alg}, then the quadruple $(\S,\Theta_\lambda,\sigma_\lambda,M^\lambda)$, whose elements are respectively defined as in (\ref{def_theta_lambda}), (\ref{def_sigma_lambda}) and (\ref{probe_POVM}) for all $0 < \lambda \leq \lambda_0$ with implicit parameters $\varepsilon,\mu >0$ and $c$ as in \eqref{tecreq}, constitutes a measurement scheme which, in the limit $\lambda \to 0$, induces the POVM $M^{f,\epsilon}$ \eqref{gaussian_POVM} and the instrument $\Instr^{f,\epsilon,\delta}$ \eqref{eq:dephasedgauss_instr}, where
    \begin{equation}\label{epsdel}
    \epsilon=\sqrt{\varepsilon^2+c^2\mu^2},\quad  \delta=\sqrt{\frac{c^2}{\mu^2}-\frac{1}{4(\varepsilon^2+c^2 \mu^2)}}.
    \end{equation}
    Moreover, we can implement arbitrary (positive real) values of $\epsilon$ and $\delta$. It follows that
    \be M^{f,\epsilon} \in \overline{\mathbbm{M}}_1, \qquad \Instr^{f,\epsilon,\delta} \in \overline{\mathbbm{I}}_1. \non
    \ee
    for arbitrary $f \in \SmoothC(\M)$ and $\epsilon,\delta > 0$.
\end{prop}

\begin{proof}

To begin with, we prove that $\epsilon$ and $\delta$ as defined above can obtain arbitrary values. Note in this regard that while $\varepsilon$ and $\mu$ are free parameters of the protocol and thus can be taken arbitrarily small, $c^2$ in \eqref{tecreq} is constrained to be larger than $\frac{1}{2}$ by \eqref{eq:uncertaintyrel}. However, we can minimize the value of $c^2$ by exploiting our freedom in choosing a canonical conjugate for $q_\lambda = [g_\lambda]$ and in choosing our initial quasi-free probe state $\sigma$ with covariance $\Gamma$. In particular, if we choose our quasi-free state $\sigma'_0$ to be \emph{pure}, then, for any $\delta'>0$, we can find a conjugate $\bar{g}$ of $g$ such that $\Gamma'_0(g,g)\Gamma'_0(\bar{g},\bar{g})\leq (\frac{1}{2}+\delta')^2$. Squeezing $\sigma_0'$ appropriately, we obtain a state $\sigma_0$ with covariance matrix $\Gamma_0$ satisfying $\Gamma_0(g,g)=\Gamma_0(\bar{g},\bar{g})\leq\frac{1}{2}+\delta'$. Mixing it with its flipped version and liberating it, we obtain a state of the form (\ref{tecreq}), with $c^2\leq \frac{1}{2}+\delta'$. 

Thus, by variation of our free parameters $\varepsilon, \mu, c^2-\frac{1}{2} > 0$, we can obtain arbitrary positive values for $\epsilon$ and $\delta$ as defined in \eqref{epsdel}: To see this, let
\begin{equation}
    (\epsilon',\delta') = \underset{\varepsilon \to 0}{\lim} (\epsilon,\delta) = (\mu c, (\mu c)^{-1} \sqrt{c^4-\tfrac{1}{4}}).
\end{equation}
It is then easy to see that $(\epsilon',\delta')$ can take any value within $(0,\infty)^2$ by fixing $c = ((\epsilon' \delta')^2+\tfrac{1}{4})^{\tfrac{1}{4}}$ and $\mu = ((\epsilon' \delta')^2+\tfrac{1}{4})^{-\tfrac{1}{4}}\epsilon'$. Inserting this $\mu$ and $c$ instead in \eqref{epsdel}, where $\varepsilon$ is nonzero, for arbitrary but fixed $\epsilon',\delta'>0$, $(\epsilon,\delta)$ will approximate $(\epsilon',\delta')$ with arbitrary precision by choosing sufficiently small $\varepsilon > 0$.

Moving on, the instrument $\{\Instr^\lambda_\out \}_{\out  \in \R}$ induced by the measurement scheme $(\Theta_\lambda,\sigma_\lambda,M^\lambda)$ is defined through
\begin{equation}
\Instr^\lambda_\out (\a ):=\eta_{\sigma_\lambda}\left(\Theta_\lambda(\a\otimes M^\lambda_\out )\right), \quad \a \in \S.
\end{equation}
We will prove that
\begin{equation}
\lim_{\lambda\to 0}\Instr^\lambda_\out (\a )=\Instr^{f,\epsilon,\delta}_\out (\a ), \quad \a \in \S,
\end{equation}
which implies, in particular, $\lim_{\lambda \to 0} \Instr^\lambda_\out (1) = M_\out^{f,\epsilon}$, since $\Instr_\out ^{f,\epsilon,\delta}(1) = M_\out^{f,\epsilon}$ as discussed in \Cref{sec:quantumops}.

To characterize the instrument $\Instr^\lambda$, it suffices to study its action on Weyl operators $e^{i\Phi(h)}$ with arbitrary $h \in \SmoothC(\M)$, i.e., we compute
\begin{equation}
  \Instr^\lambda_\out \left(e^{i\Phi(h)}\right)=\frac{1}{\sqrt{2\pi}\varepsilon}\eta_{\sigma_\lambda}\left(\Theta_\lambda\left(e^{i\Phi(h)} \otimes e^{-\frac{\left(\frac{\Psi(g)}{\lambda}-\out \right)^2}{2\varepsilon^2}}\right)\right)
\end{equation}
According to \eqref{eq:lem1} in Appendix~\ref{app:gauss}, it holds that
\begin{equation}
  \frac{1}{\sqrt{2\pi}\varepsilon} e^{-\frac{\left(\frac{\Psi(g)}{\lambda}-\out \right)^2}{2\varepsilon^2}} = \int e^{-\frac{\varepsilon^2z^2}{2}}e^{-iz\out } e^{iz\frac{\Psi(g)}{\lambda}} \frac{dz}{2\pi}, \label{eq:fourgauss}
\end{equation}
where the integral here and below refers to the Bochner integral with respect to the strong-operator topology (as specified in the appendix). Thus, we first compute
\begin{equation}
  \eta_{\sigma_\lambda}\left(\Theta_\lambda\left(e^{i\Phi(h)} \otimes e^{iz\frac{\Psi(g)}{\lambda}} \right)\right).
\end{equation}

Since for any $h \in \SmoothC(\Cou^+)$ we find a processing region $\Pro$ (subject to the constraints from above) with $\supp h \subset \Pro$ and since $\mc{C}_T(\Cou^+) = \mc{C}_T$ (by the classical timeslice property), we assume $h \in \SmoothC(\Pro)$ without loss of generality. This allows us to evaluate the action of $\Theta_\lambda$ by \eqref{def_theta_lambda} and \eqref{eq:thetaexpl}. We obtain:

\begin{align}
&\eta_{\sigma_\lambda}\left(\Theta_\lambda\left(e^{i\Phi(h)} \otimes e^{iz\frac{\Psi(g)}{\lambda}}\right)\right)\nonumber\\
&=\eta_{\sigma_\lambda}\left(e^{i\Phi(h_\lambda+f_\lambda z)} \otimes e^{i\Psi(\frac{g_\lambda}{\lambda}z + \lambda p_\lambda)} \right)\nonumber\\
&=e^{i\Phi(h_\lambda+f_\lambda z)} \sigma_\lambda\left( e^{i\Psi(\frac{g_\lambda}{\lambda}z + \lambda p_\lambda)} \right)\nonumber\\
&=e^{i\Phi(h_\lambda)} e^{i(\Phi(f_\lambda) -\frac{1}{2}\langle f_\lambda, E h_\lambda\rangle)z}\sigma_\lambda\left(e^{i\Psi\left(\frac{g_\lambda}{\lambda}z+\lambda p_\lambda\right)} \right) \nonumber\\
&=e^{i\Phi(h_\lambda)} e^{i(\Phi(f_\lambda) -\frac{1}{2}\langle f_\lambda, E h_\lambda\rangle)z}\sigma\left(e^{i\Psi\left(\mu g_\lambda(z-\lambda^2\langle \bar{g}_\lambda, E p_\lambda\rangle) +\frac{\bar{g}_\lambda}{\mu}\langle g_\lambda,E p_\lambda\rangle+\lambda p_\lambda^\perp\right)}\right),\label{eq:step0}
\end{align}
where 
\be \label{eq:compl}
    p_\lambda^\perp \defby p_\lambda - E(p_\lambda, \bar{g}) g + E(p_\lambda,g)\bar{g};
\ee
similar\footnote{Note that the dependence on the choice of representatives, $g$ and $\bar{g}$, drops out since $e^{i\Psi(\cdot)}$ depends only on $\mc{C}_T$ not on $\SmoothC(\M)$.} to \eqref{eq:orth}.

Then we take \eqref{eq:fourgauss} and act on it with the map $\eta_{\sigma_\lambda} \left(\Theta_\lambda (e^{i\Phi(h)} \otimes \bullet )\right)$, which we can pull through the $z$-integral by linearity and continuity of the map (note here that $\eta_\sigma$ is linear and continuous by definition and that $\Theta_\lambda$ is an automorphism, also implying linearity and continuity). Inserting \eqref{eq:step0} we obtain
\begin{align}
  \Instr^\lambda_\out  & \left(e^{i\Phi(h)}\right) = e^{i\Phi(h_\lambda)} \nonumber \\
  & \times \int e^{-\frac{\varepsilon^2z^2}{2}}e^{i(\Phi(f_\lambda)-\out -\frac{1}{2}\langle f_\lambda, E h_\lambda\rangle)z}\sigma\left(e^{i\Psi\left(\mu g_\lambda(z-\lambda^2\langle \bar{g}_\lambda, E p_\lambda\rangle) +\frac{\bar{g}_\lambda}{\mu}\langle g_\lambda,E p_\lambda\rangle+\lambda p_\lambda^\perp\right)}\right) \frac{dz}{2\pi}.\label{eq:step1}
\end{align}

Next, let us take the limit $\lambda \to 0$ on \eqref{eq:step1}. For this note that the integrand in \eqref{eq:step1} is majorized by $e^{-\frac{\varepsilon^2 z^2}{2}}$ uniformly in $\lambda$. Thus, by dominated convergence for Bochner integrals \cite[Thm.~3]{DU77}, we can take the limit inside the integral and since $\sigma$ is quasi-free (thus regular, confer Appendix~\ref{app:aqft}) also inside $\sigma$. We obtain:
\begin{align}
&e^{i\Phi(h)}\int e^{-\frac{\varepsilon^2z^2}{2}}e^{i(\Phi(f)-\out -\frac{1}{2}\langle f, E h\rangle)z}\sigma\left(e^{i\Psi\left(\mu z g +\frac{\bar{g}}{\mu}\langle g,E p\rangle\right)}\right) \frac{dz}{2\pi} \nonumber\\
&=e^{i\Phi(h)}\int e^{-\frac{\varepsilon^2z^2}{2}}e^{i(\Phi(f)-\out -\frac{1}{2}\langle f, E h\rangle)z}e^{-\frac{c^2}{2}\left(\mu^2z^2+\frac{\langle g,E p\rangle^2}{\mu^2}\right)} \frac{dz}{2\pi}\nonumber\\
&=\frac{1}{\sqrt{2\pi(\varepsilon^2+c^2\mu^2)}}e^{i\Phi(h)}e^{-\frac{(\Phi(f)-\out -\frac{1}{2}\langle f, E h\rangle)^2}{2(\varepsilon^2+c^2\mu^2)}}e^{-\frac{c^2 \langle g, E p\rangle^2 }{2\mu^2}},
\label{eq_interm}
\end{align}
with $p$ defined as in (\ref{def_p}).

Moreover, since $\supp h \subset \Pro$, we have by \eqref{eq:symplcons} that $\langle p, E g\rangle =\langle f, E h\rangle$ and insertion into \eqref{eq_interm} yields
\begin{equation}
\lim_{\lambda\to 0}\Instr_\out ^\lambda\left(e^{i\Phi(h)}\right)=\frac{1}{\sqrt{2\pi(\varepsilon^2+c^2\mu^2)}} e^{-\frac{c^2\langle f, E h\rangle^2}{2\mu^2}} e^{i\Phi(h)}e^{-\frac{(\Phi(f)-\out -\frac{1}{2}\langle f, E h\rangle)^2}{2(\varepsilon^2+c^2\mu^2)}}.
\end{equation}
By comparison with \eqref{action_gauss_instr} we conclude the proof.
\end{proof}

\paragraph{Moving the ``FV Heisenberg cut'':}
After establishing that Gaussian measurements are asymptotically FV-realizable in \Cref{lemma_gaussian}, there are essentially two ways to extend our scheme; cf.~\cite[Sec.~4.3]{FV23}. One option is to treat multiple probes all coupled to the target theory, which allows for modeling successive Gaussian measurements on the target theory. The other is to treat a measurement chain as discussed in \Cref{sec:fv}. In this paragraph, we will show that for a suitable iteration of our scheme we obtain arbitrarily long measurement chains inducing Gaussian measurements asymptotically in the limit $\lambda \to 0$. Thus, we prove that Gaussian measurements admit a movable FV Heisenberg cut and complete the proof of our main theorem (\Cref{thm:asymptFV}). Our precise result is

\begin{prop}\label{prop_heisenberg}
    Let $\S$ be a QFT induced by $T$, as described in \Cref{sec:alg}. Then for every $n \in \mathbbm{N}$, the sequence $( (\P_j, \Theta^j_\lambda,\sigma^j_\lambda,M^{j,\lambda}))_{j=1}^n$, whose elements are respectively defined as in \eqref{eq:itint},\eqref{eq:itstate} and \eqref{eq:itpovm} for all $0 < \lambda \leq \lambda_0$ with implicit parameters $\varepsilon$, $\mu_1$,..,$\mu_n$ and $c_1,..,c_n$, defines a measurement chain with $n$ steps which, in the limit $\lambda \to 0$, induces the POVM $M^{f,\epsilon}$ \eqref{gaussian_POVM} and the instrument $\Instr^{f,\epsilon,\delta}$ \eqref{eq:dephasedgauss_instr}, where 
    \be \label{eq:epsdelpp}
        \epsilon = \sqrt{\varepsilon^2 + \sum_{j=1}^n \mu_j^2 c_j^2}, \quad \delta = \sqrt{ \frac{c_1^2}{\mu_1^2} - \frac{1}{4(\varepsilon^2 + \sum_{j=1}^n \mu_j^2 c_j^2)}}.
    \ee
    Moreover, since $\{\mu_j\}_{j=2}^n$ are arbitrary positive parameters, we can take $(\epsilon,\delta)$ as defined here as close as we want to the values of $(\epsilon,\delta)$ as defined in \eqref{epsdel} with $c=c_1$ and $\mu = \mu_1$. It follows that 
    \be \non
        M^{f,\epsilon}\in\overline{\mathbb{M}}_n, \qquad \Instr^{f,\epsilon,\delta}\in\overline{\mathbb{I}}_n
    \ee 
    for arbitrary $n \in \N$, $f \in \SmoothC(\M)$ and $\epsilon,\delta > 0$.
\end{prop}

\begin{proof}
The proof is easy, but cumbersome, so we omit some of the details. To begin with, we fix arbitrary $\varepsilon > 0$ and $n \in \N$ and suppose $j\in \{1,..,n\}$ throughout the proof. Also, we fix $f \in \mc{C}_T(\Reg)$ for some precompact $\Reg \subset \M$ and assume without loss of generality\footnote{If $[f]=0$, it implies trivially that $M^{f,\epsilon}$ and $\Instr^{f,\epsilon,\delta}$ are trivial and thus automatically contained in $\mathbbm{M}_n$, resp., $\mathbbm{I}_n$ for all $n \in \N$ and arbitrary $\epsilon,\delta > 0$.} that $[f] \neq 0$ within $\mc{C}_T$. We consider a target QFT $\S$ whose field we denote by $\Phi$ and $n$ probe QFTs $\{\mc{P}_j\}$ whose fields we denote by $\{ \Psi_j\}$, all induced by the same equation of motion operator $T$.

\begin{figure}[ht]
    \centering
    \includegraphics[width=.7\textwidth]{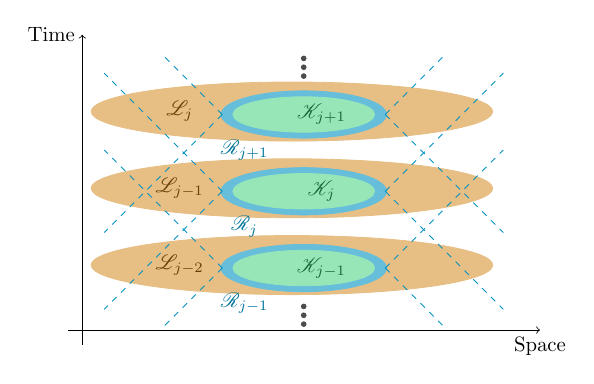}
    \caption{Schematic setup for the involved spacetime regions in the iterated scheme.}
    \label{fig:scheme_iterated}
\end{figure}

Next, we select localization regions $\{´\mathscr{R}_j\}$, coupling regions $\{ \Cou_j \}$ and processing regions $\{\Pro_j\}$ within $\M$ as well as probe modes $\{ g^j\}$ and interaction strengths $\{\beta_j\}$ within $\SmoothC(\M)$ suitable to induce the target mode $f$ classically. In particular, we choose $\{ \mathscr{R}_j\}$ and $\{ \Pro_j\}$ to be arbitrary precompact regions within $\overline{\Reg}^+$ such that $\Pro_j \subset \overline{\Pro_{j-1}}^+$ and $\mathscr{R}_j \subset \Pro_{j-1} \cap D^{-}(\Pro_{j})$ for all $j$; here $\Pro_0 = \Reg$. The involved spacetime regions are illustrated in \Cref{fig:scheme_iterated}. Then we invoke the classical part of the scheme from above iteratively: Starting with $j=1$ and $g^0=f$, we find regions $\Cou_j \subset \mathscr{R}_j$, $g^j \in \SmoothC(\Pro_j)$ and $\beta_j \in  \SmoothC(\Cou_j)$ such that 
\begin{equation}
    g^{j-1} = -\beta_{j} E^- g^{j};
\end{equation}
which implies also $[g^j] \neq 0$ for all $j$ since $[f] \neq 0$.

Given this, we can define the interaction. Fix $\lambda\in\R^+$. For each $j$, we consider the interaction term
\begin{equation}
\lambda\int_{\Cou_j} \mathrm{vol}(x) \, \beta_j(x)\psi_{j-1}(x)\psi_{j}(x)
\end{equation}
with $\psi_0 = \phi$. We define the corresponding scattering morphism $\Theta^j_{\lambda}$ on $\mc{P}_{j-1} \otimes \mc{P}_{j}$ with $\mc{P}_0 = \S$ in analogy with \eqref{def_theta_lambda}. That is, 
\begin{equation} \label{eq:itint}
\Theta^j_{\lambda}(e^{i\Psi_{j-1}(u^{j-1}_\lambda)} \otimes e^{i\Psi_{j}(u^j_\lambda)})=e^{i\Psi_{j-1}(u^{j-1})} \otimes e^{i\Psi_{j}(u^j)},    
\end{equation}
with    
\begin{equation}
\left(\begin{array}{c}u^{j-1}_\lambda\\ u^j_\lambda\end{array}\right) = \theta^j_{\lambda}\left(\begin{array}{c}u^{j-1}\\
u^j\end{array}\right)
\end{equation}
and $\theta^j_\lambda$ as in \eqref{eq:thetaexpl} for $\beta = \beta_j$.
    
Starting from $g^n_{-}(\lambda):=g^n$, we inductively define the $\lambda$-dependent functions $\{ g^j_{\pm}(\lambda) \}_j$ through the relation
\begin{equation}
\left(\begin{array}{c}\lambda g^{j-1}_{+}(\lambda) \\ g^j_{-}(\lambda)\end{array}\right)=\theta^j_{\lambda}\left(\begin{array}{c}0\\g^j_{-}(\lambda) \end{array}\right).
\label{def_g_pm}
\end{equation}
Note that $\lim_{\lambda\to 0} g^j_{\pm}(\lambda) =g^j$. We thus choose $\lambda$ small enough so that $[g^j_{\pm}(\lambda)]\not=0$ for all $j$. Hence, for every $g^j_{+}(\lambda)$, there exists a canonical conjugate $\bar{g}^j_{+}(\lambda)$.

Let $\sigma$ be a quasi-free probe state satisfying the technical condition \eqref{tecreq}, with $c = c_1$ and $g=g^1$. For each $j$ and arbitrary $\mu_j>0$, on $\P_j$, we define the probe state
\begin{equation}\label{eq:itstate}
\sigma^j_{\lambda}:=\sigma_{F_{[\bar{g}^j_{+}(\lambda)], [g^j_{+}(\lambda)],\mu_j\lambda^j}}.
\end{equation}

Starting from $M^n(\lambda)\defby M^{g^n_{-}(\lambda)\lambda^{-n},\varepsilon}$, we define the POVMs $(M^j(\lambda))_{j=1}^{n-1}$ through the recursive relation:
\begin{equation}
  M_\out^j(\lambda)=\eta_{\sigma^\lambda_j}\circ\Theta_\lambda^{j+1}(\id\otimes M^{j+1}_s(\lambda)).
\label{recur_POVMs}
\end{equation}
By construction, $M^j(\lambda)\in \mathbb{M}^{n+1-j}$, for $j=1,...,n$.
    
To arrive at an explicit expression for the $j^{th}$ POVM, we note that, for $\omega$ quasi-free, $g\in \SmoothC(\M)$ and $\varepsilon\in\R^+$, it holds that
\begin{equation}
\eta_{\omega}\circ\Theta_\lambda(\id\otimes M^{g,\varepsilon}_s)=M^{f_\lambda,\varepsilon'},
\label{explicit_recur_POVMs}
\end{equation}
with
\begin{equation}
\varepsilon'=\sqrt{\varepsilon^2+\lambda^{-2}\Gamma_\omega(g_\lambda,g_\lambda)},      
\end{equation}
where $g_\lambda, f_\lambda$ are defined from $g$ by the first identity in eq. (\ref{def_funcs}). From eqs. (\ref{def_g_pm}), (\ref{recur_POVMs}), (\ref{explicit_recur_POVMs}), it then follows that
\begin{equation}\label{eq:itpovm}
M^j(\lambda)=M^{g^{j}_{-}(\lambda)\lambda^{-j},\epsilon_j},
\end{equation}
with
\begin{equation}
\epsilon_j:=\sqrt{\varepsilon^2+\sum_{k=j+1}^n \mu_j^2 \Gamma(g_{+}^j(\lambda),g_{+}^j(\lambda))}.
\end{equation}

Next, we consider the effect of measurement $M^1(\lambda)$ on the target system. To do so, we note that, for $\omega$, quasi-free, $\varepsilon\in \R^+$, it holds that
\begin{equation}
\eta_{\omega}\circ \Theta_\lambda(e^{i\Phi(h)}\otimes M^{g,\varepsilon}_s)=e^{i\Phi(h_\lambda)}\int \frac{dz}{2\pi} e^{-\frac{\varepsilon^2z^2}{2}}e^{i(\phi(f_\lambda)-s-\frac{1}{2}\langle f_\lambda, E h_\lambda\rangle)z}\omega\left(e^{i\Psi\left(\frac{g_\lambda}{\lambda}z+\lambda p_\lambda^\perp \right)}\right),
\end{equation}
with $f_\lambda,g_\lambda,p_\lambda$ defined from $g,h$ through Eq. (\ref{def_funcs}) and $p^\perp_\lambda$ as in \eqref{eq:orth}, the symplectic complement of $p(\lambda)$ with respect to $\mbox{span}\{g^1(\lambda),\bar{g}^1(\lambda)\}$
    
Substituting $\omega$ by $\sigma_1$; $g$, by $\lambda^{-1} g^1_-(\lambda)$, and $\varepsilon$, by $\epsilon_1$, the instrument induced on the target system is:
\begin{equation}
\Instr_\out^\lambda=\int \frac{dz}{2\pi} e^{\frac{-\epsilon_1^2z^2}{2}}  e^{i(\Phi(g^0_{-}(\lambda))-s-\frac{1}{2}\langle g^0_{-}, E h_\lambda\rangle)z}\sigma\left(e^{i\Psi(\mu g^1_{+}(\lambda)(z-\lambda^2\langle \bar{g}^1_{+}(\lambda),E p(\lambda)\rangle +\lambda p^\perp(\lambda)}\right),
\label{final_instrument}
\end{equation}
where $h(\lambda),p(\lambda)$ are defined by the relation:
\begin{equation}
\left(\begin{array}{c}h(\lambda)\\ \lambda p(\lambda) \end{array}\right)=\theta^1_{\lambda}\left(\begin{array}{c}h\\
0\end{array}\right).
\end{equation}
Note that $\lim_{\lambda\to 0}h(\lambda)=h$, $\lim_{\lambda\to 0}p(\lambda)=-\beta_1E^-h$. 

By construction, $I^\lambda\in \mathbb{I}_n$. From Eq.~(\ref{final_instrument}), it is obvious that
\begin{equation}
\lim_{\lambda\to 0} I_s^\lambda=I^{f,\epsilon,\delta},
\end{equation}
with $\epsilon, \delta$ as defined in \eqref{eq:epsdelpp} and $c_j^2 \defby \Gamma(g^j,g^j)$.

\end{proof}

\paragraph{Projective measurements and recovery of the ``projection postulate'':}

The FV scheme outlined in the proof of \Cref{lemma_gaussian} allows one to realize many POVMs other than Gaussian-modulated field measurements. Indeed, consider a Gaussian measurement $\{ M_\out ^{f,\epsilon}\}_\out $ and let $\{p_{\tilde{\out }}:\R\to\R^+\}_{\tilde{\out }\in\tilde{\Out}}$ be any set of Borel-integrable functions with normalized parameter-dependence $\int_{\tilde{\Out}}p_{\tilde{\out }}(\out )d\tilde{\out }=1$, for all $\out \in\R$. Then the POVM $\{ \tilde{M}_{\tilde{\out }}^{f,\epsilon} \}_{\tilde{\out }}$ that results when we measure $\{ M_\out ^{f,\epsilon}\}_\out $ and, upon obtaining the result $\out $, sample $\tilde{\out }$ from the probability distribution $\{p_{\tilde{\out }}(\out )\}_{\tilde{\out }}$ will have the form
\begin{equation}
\tilde{M}_{\tilde{\out }}^{f,\epsilon}=\left(p_{\tilde{\out }} \ast g_\epsilon \right)\left(\Phi(f)\right), \quad g_\epsilon(x) \defby \tfrac{1}{\sqrt{2\pi}\epsilon}e^{-\frac{x^2}{2\epsilon^2}},
\label{meas_with_noise}
\end{equation}
using $\ast$ to denote convolution of functions. Since $\widehat{p_{\tilde{\out }} \ast g_\epsilon}(z) = e^{-\frac{\epsilon^2 z^2}{2}} \hat{p}_{\tilde{\out }}(z)$ is clearly integrable in $z$ for all $\epsilon > 0$, the proof of \Cref{lemma_gaussian} with initial probe measurement $\tilde{M}_\out^{\lambda^{-1}g,\epsilon}$ goes through and we obtain $\tilde{M}^{f,\epsilon} \in \overline{\mathbbm{M}}_1$. Taking the limit $\epsilon\to 0$, we have that $\tilde{M}_{\tilde{\out }}^{f,\epsilon}$ tends to $p_{\tilde{\out }}(\Phi(f))$ which is thus also within $\overline{\mathbbm{M}}_1$.

Combined with \Cref{prop_heisenberg}, this observation implies that any POVM of the form $\{ p_{\tilde{\out }}(\Phi(f))\}_{\tilde{\out }}$ also belongs to $\overline{\mathbbm{M}}_n$, for all $n\in \N$. Clearly, also discrete-outcome variants of these POVMs can be modeled; in this case normalization of $\{ p_j \}_j$ changes to $\sum_j p_j(\out ) = 1$. This includes the option $p_j=\chi_{B^j}$, where $\{B^j\subset \R\}_j$ is a discrete family of pairwise disjoint Borel-measurable sets that satisfy $\bigcup_j B^j=\R$. As a result, also the projective measurement $\{ \Pi_j^f\}_j$ with $\Pi^f_j \defby\Pi_{B^j}(\Phi(f))$ is asymptotically FV-realizable. 

However, note that, if we follow the construction of \Cref{lemma_gaussian}, then the measurement channel of the instrument associated to the POVM~\eqref{meas_with_noise} is always $\Instr^{f,\epsilon,\delta}$, regardless of how we choose to process the ``real" outcome $\out $. So, even in the limit $\epsilon\to 0$ and transforming to discretely many outcomes $j$, the measurement channel will not resemble anything like $\sum_{j} \Pi_j^f \bullet \Pi_j^f$ which would run again into Sorkin's causal paradox \cite{Sor93}. The FV formalism neatly avoids them, despite the fact that it allows measuring the projectors $\Pi^f_j$ up to arbitrary accuracy. The ``projection postulate" is restored only for observers in the causal complement region to the field mode $f$: For an observer who knows the classical outcome $\out$ to be contained in some Borel set $B \subset  \R$ but has access only to field modes $h$ which are supported in the causal complement to the support of $f$, we obtain Born's rule from \eqref{action_gauss_instr} in the (strong-operator) limit
\be
    \lim_{\epsilon \to 0} \, \int_B \Instr^{f,\epsilon,\delta}_\out (e^{i\Phi(h)}) d\out = \Pi_B(\Phi(f)) e^{i\Phi(h)} \Pi_B(\Phi(f)),
\ee
while the overall measurement channel is trivial,
\be
    \lim_{\epsilon \to 0} \, \overline{\Instr}^{f,\epsilon,\delta} (e^{i\Phi(h)}) = e^{i\Phi(h)}.
\ee 
We conclude that if the observer cannot resolve the causal structure of the apparatus which led to the measurement of $f$, then the projection postulate might be a valid idealization.

\paragraph{Movable ``Heisenberg cut'' in the purely *-algebraic framework:}
We conclude this section by noting that, even when $\A$ is just a *-algebra and not completed with respect to any topology (as done in the original formulation of the FV framework \cite{FV20}), it still allows to describe a large class of non-trivial, FV-realizable measurements with a movable FV-Heisenberg cut. Let $\A^f$ denote the *-algebra spanned by all abstract Weyl operators of the form\footnote{In order to keep the notation in line with the main text we write $e^{iz\Phi(f)}$ in place of the abstract symbols $W(zf)$ appearing in Appendix~\ref{app:aqft}. However, we don't suppose that the algebra elements or the symbols $\Phi(f)$ are operators on a Hilbert space or that they are continuous in $z$ or $f$.} $\{e^{iz\Phi(f)}:\, z\in\R\}$ and consider a finite-outcome POVM $\{M_j\}_j \subset \A^f$ such that, for some finite family $\{\a_j^k: \, j,k\}\subset \A^f$ and every outcome $j$,
\begin{equation}\label{eq:abspovm}
M_j=\sum_k \a_j^k(\a_j^k)^*.
\end{equation}
Given the (unambiguous) expression of $M_j$ as a finite linear combination of imaginary exponentials of $\Phi(f)$, define $\hat{M}_j(s)$ by replacing every instance of $\Phi(f)$ by $s$. Then, for any $\lambda\in\R$, $\{\hat{M}_j(\Psi(g_\lambda/\lambda))\}_j$ defines a POVM in the probe field. Moreover, the induced POVM element
\begin{equation}
\hat{M}^\lambda_j:= \eta_{\sigma_\lambda}\circ\Theta_\lambda\left(\hat{M}_j\left(\Psi\left(g_\lambda/\lambda\right)\right)\right)
\end{equation}
is also a positive semidefinite element of $\A$. One can then verify that
\begin{equation}
\lim_{\lambda\to 0}\omega(\hat{M}^\lambda_j)=\omega(M_j) \label{eq:explimit}
\end{equation}
for any regular state $\omega$ on $\A$, i.e., a state such that $\omega(e^{it\Phi(f)})$ is continuous in $t$ (see also Appendix~\ref{app:aqft} for details). Therefore, the FV framework---also in the purely algebraic setting---includes certain ``Weyl type" POVMs linearly generated from Weyl operators of a single field mode.

\section{Conclusion}\label{sec:conclusion}
We have argued that the tomographic results of \cite{FJR23} are insufficient to infer that the FV scheme is rich enough to model field measurements. Rather, it is proven there that averages of powers of the field operator can be estimated asymptotically. As explained in the text, this is not the same as being able to implement a (Gaussian-modulated) field measurement: in fact, their asymptotic tomographic scheme in the limit $\lambda\to 0$ induces the identity map as a measurement channel on the target QFT.

Instead, we have shown that Gaussian-modulated measurements of a (local) smeared field operator can be asymptotically implemented within the FV scheme. The degree $\epsilon$ of the modulation can be chosen arbitrarily low, which allows one to approximate a projective measurement of the field with arbitrary precision. For non-zero $\epsilon$, there is a simple map to update the QFT state conditioned on the measurement outcome: namely, the composition of the Gaussian instrument proposed by Jubb \cite{Jub22} and further studied by \cite{Oec24} with a single-mode dephasing channel.

Notably, we find that the probe measurement required to induce a Gaussian measurement in a QFT is itself a Gaussian measurement. That is, we can model the measurement carried out on this first probe by making it interact with a second probe, which is subsequently measured. This measurement, in turn, can be also conducted by making the second probe QFT interact with a third, and so on \emph{ad infinitum}.

It would be interesting to know if the property of having a movable FV-Heisenberg cut is applicable to all quantum measurements admitting an FV representation. The answer would be a resounding ``yes!" if all POVMs with elements localizable within a local algebra happened to be FV realizable.

\backmatter

\bmhead{Acknowledgements}

The authors thank Henning Bostelmann, Chris Fewster, Markus Fröb, Claudio Iuliano, Robert Oeckl, Maximilian Heinz Ruep, Leonardo Sangaletti, Rainer Verch for valuable discussions on the subject as well as remarks and questions which helped improving this text. 

\bmhead{Funding}

J.~Mandrysch was funded by the quantA core project ``Local operations on quantum fields".

\newpage

\begin{appendices}

\section{Abstract Weyl CCR-algebra}\label{app:aqft}

In this section, we motivate the Hilbert space setting of the main text from a purely algebraic construction of the Weyl CCR-algebra. The statements presented here are well known and can be gathered from text book accounts like \cite{Pet90} and \cite[Sec.~18]{HR15}. In this section, $(X,\spl)$ will be an arbitrary fixed symplectic space. The application to (linear scalar) QFT arises when $(X,\spl) = (\mc{C}_T,E)$ as defined in \Cref{sec:alg}. 

The \emph{*-Weyl algebra} $\mc{W}_{*}(X,\spl)$ over $(X,\spl)$ is defined as the unital *-algebra generated by symbols $\{ W(v), \, v \in X \}$ subject to the relations
    \begin{equation}
        W(v)^\ast = W(-v), \qquad W(v)W(w) = e^{-\frac{i}{2}\spl(v,w)} W(v+w).
    \end{equation}
The \emph{C*-Weyl algebra} $\mc{W}_{C*}(X,\spl)$ is obtained as the completion of $\mc{W}(X,\spl)$ with respect to its universal C*-norm. We denote $\mc{W}$ if our statements apply to both algebras.

It is well known (confer e.g. \cite[Sec.~18.1-2]{HR15}) that symplectic transformations of $X$ induce automorphisms on $\mc{W}$ via
\begin{equation}
    \alpha_F(W(u)) \defby W(Fu),\quad u\in X,
\end{equation}
where $F$ denotes the symplectic transformation and $\alpha_F$ the associated automorphism. Thus, given a state $\omega$ on $\mc{W}$ one may define a new state $\omega_F$ by 
\begin{equation}
   \omega_F(A) = \omega(\alpha_F(A)), \quad A \in \mc{W}.
\end{equation}

A state $\omega$ on $\mc{W}$ is called \emph{regular} iff the orbits $t \mapsto \omega(W(tv))$ are continuous for all $v \in X$. Note that all quasi-free states are regular since  $\omega(W(tv)) = e^{-t^2\Gamma(v,v)-itV(v)}$ is continuous in $t$ for all $v \in X$; here $V$ and $\Gamma$ denote the one- and two-point function of $\omega$.

Now, let $(X,\spl) = (\mc{C}_T,E)$ and let $(\pi_\omega,\H_\omega,\Omega_\omega)$ denote the GNS representation of $\mc{W}$ with respect to a regular state $\omega$. Then we obtain the setting of the main text. Namely, for each $f \in \SmoothC(\M)$ there is a unique self-adjoint operator $\Phi_\omega(f)$ on $\H_\omega$ such that
 \begin{equation}
    \pi_\omega(W([f])) = e^{i\Phi_\omega(f)}
 \end{equation}
 satisfying \eqref{eq:cont} and \eqref{eq:weyl} (cf. e.g. \cite[Sec.~18.3]{HR15}). Define $\A(\Reg) \defby \pi_\omega(\mc{W}(\Reg))'' \subset B(\H_\omega)$, where $'$ denotes the commutant algebra, then by the bicommutant theorem, $\A(\Reg)$ is thus a von-Neumann algebra. The resulting net of von-Neumann algebras $\{ \A(\Reg)\}_\Reg$ satisfies the usual axioms as given in the main text.

\newpage

\section{Preservation of Hadamard condition under finite-rank perturbations}\label{app:finrank}

In this section we show that the class of Hadamard states---and the subclass of quasi-free Hadamard states---of a real linear scalar field on a globally hyperbolic spacetime are preserved under finite-rank perturbations of the underlying classical phase space. While the result presented here might be known to experts, it has to the best of the author's knowledge not been addressed in the literature\footnote{Note though \cite[Thm.~C.4]{Ver94} which gives a similar argument using \cite[App.~B]{KW91} and which implies the result for the Klein-Gordon field under single-rank perturbations.}. However, this class of transformations, which we call \emph{finite-rank perturbations} for short, is very relevant in applications and arise for instance in the context of quantum optics whose instruments typically act only on a finite number of field modes (at least in idealization). The proof relies on the fact that such symplectic transformations induce smooth changes of the two-point function. Pending precise definitions which we will give below, we show that

\begin{thm}\label{prop:finhadamard}
    Let $\mc{W}(X,\spl)$ denote the *- or C*-Weyl algebra over the symplectic space $(X,\spl)$ as defined in Appendix~\ref{app:aqft}. Then the class of Hadamard states on Weyl-CCR algebra of a linear scalar field, $\mc{W}(\mc{C}_T,E)$, where $(\mc{C}_T,E)$ is defined as in \Cref{sec:alg}, is preserved under symplectic finite-rank perturbations.
\end{thm}

For our purposes giving a full definition of \emph{Hadamard states} on $\mc{W}(\mc{C}_T,E)$ is overly technical. Instead, we will state two facts on Hadamard states and otherwise refer to \cite[Sec.~2.4]{Hac16} for summarized results and references on Hadamard states and to \cite{San10} for a treatment including (generalized) linear scalar fields. Fact 1: Given a Hadamard state $\omega$ on $\mc{W}(\mc{C}_T,E)$ with covariance $\Gamma$ (i.e. symmetric part of the two-point function), then for any $q \in \mc{C}_T$, $\Gamma(\cdot,q)$, viewed as a distribution on $\M$, has a smooth kernel\footnote{Since $\Gamma$ is the symmetric part of the two-point function, it is sufficient to show that any Hadamard two-point function has this property. This is well known. Note that while most references like \cite[App.~B]{KW91} and \cite[Prop.~5.3.17(a)]{BDFY15} use a slightly less general setting ($T = \square_\M + W$ for some $W \in \Smooth(\M)$ or for $W = m^2 + \xi R$, respectively) than we have ($(T = \square_\M + V^\alpha \nabla_\alpha + W$ for some $V^\alpha,W \in \Smooth(\M)$), the wavefront set properties of $E$ and the Hadamard two-point function remain unchanged in our setting (compare e.g. \cite[Thm.~16]{BF09a}) and the argument of \cite[Prop.~5.3.17(a)]{BDFY15} goes through.}. Fact 2: Given two states $\omega$ and $\omega'$ on $\mc{W}(\mc{C}_T,E)$, where one of them is Hadamard, then the other is Hadamard iff the difference of their covariances, viewed as a bi-distribution on $\M$, has a smooth kernel\footnote{The reverse direction, which is everything we need here, holds more or less by definition: The addition of a bi-distribution with a smooth kernel does not change the singular part of the two-point function nor the wavefront set. The other direction, that the difference of two Hadamard two-point functions is smooth, can also be viewed as direct consequence of definition as given e.g. in \cite{KW91,Ver94} or proven from the wavefront set characterization as in \cite[Lemma~2.9+Prop.~3.2]{San10}.} Thus, in proving that the difference of the covariance of the transformed and untransformed state are smooth in \Cref{lem:smoothdiff} below, we will validate \Cref{prop:finhadamard}.

\begin{rem}
    It is well known that the class of quasi-free states is preserved under \emph{all} symplectic transformations (confer e.g. \cite[Prop.~29.3-2]{HR15}). Thus, as a corollary, we also show that the class of quasi-free Hadamard states is preserved under symplectic finite-rank perturbations.
\end{rem}

To prepare the proof, we begin with defining some elementary notions of symplectic linear algebra. Given a symplectic space $(X,\spl)$, where $\spl$ denotes the symplectic form, a linear transformation $F$ of $X$ is referred to as a \emph{symplectic transformation} (of $X$) iff $\spl(Fu,Fv) = \spl(u,v)$ for all $u,v \in X$. Given a linear subspace $S \subset X$, its symplectic complement $S^\perp \defby \{ u \in X : \spl(u,S) = 0\}$ is another linear subspace of $X$ which is orthogonal to $S$, i.e., $\spl(S,S^\perp) = 0$. A linear subspace $S \subset X$ is referred to as \emph{symplectic} iff $S \cap S^\perp = \{ 0 \}$ and in this case the restriction of $\spl$ to $S$ defines a symplectic form on $S$. Further, if $S$ has dimension $2n$ for some $n \in \mathbbm{N}$ (any finite-dimensional symplectic space has even dimension), we say that $u$ defines a \emph{standard basis} of $S$ iff $u=(r_1,s_1,\ldots,r_n,s_n) \in S^{2n}$ such that $\spl(r_{j},r_k) =\spl (s_j,s_k) = 0$ and $\spl(r_j,s_k) = \delta_{jk}$ for all $j,k=1,..,n$ and it is helpful to introduce the \emph{symplectic matrix} $J$ defined through $J_{lm} \defby \spl(u_l,u_m)$, $l,m=1,..,2n$, which is independent from the choice of $u$ and summarizes the defining relations of a standard basis.

\begin{defn}
    A symplectic transformation $F$ is referred to as a (symplectic) \emph{finite-rank perturbation} iff $F-1$ has finite-rank.
\end{defn} 
Note that $F-1$ having finite-rank means, by definition, that the range of $F-1$ is finite-dimensional. This is equivalent to the existence of a finite-dimensional symplectic subspace $S \subset X$ such that $F\restriction_S$ is a symplectic transformation of $S$ and $F\restriction_{S^\perp} = 1$. It is even sufficient to check that $F\restriction_{S^\perp} = 1$ for some finite-dimensional symplectic subspace $S \subset X$: Let $P_S$ denote the orthogonal projection from $X$ onto $S$. Then $F\restriction_S$ maps into $S$ since $\spl((1-P_S)FP_Sv,w) = \spl(FP_Sv,(1-P_S)w) = \spl(FP_Sv,F(1-P_S)w) = \spl(P_Sv,(1-P_S)w) = 0$ for all $v,w \in X$ which implies $(1-P_S)FP_S = 0$ by nondegeneracy of $\spl$. Thus $\operatorname{ran} F\restriction_S \subset S$. Moreover, $P_S F(1-P_S)=0$ by assumption. That $F\restriction_S$ is a symplectic transformation of $S$ is then immediate since $F$ is a symplectic transformation of $X$. Finally, for a symplectic subspace $S \subset X$, we refer to a symplectic transformation $F$ of $X$ as the \emph{canonical extension} of $F\restriction_S$ to $X$ iff $F\restriction_{S^\perp} = \id$. Given $F\restriction_S$, the canonical extension to $X$ is uniquely given\footnote{Clearly, the given expression is a well-defined canonical extension. Uniqueness follows since $FP_S=F\restriction_S$ and $F(1-P_S) = 1$ are required by definition and imply $P_SFP_S = F\restriction_S$, $(1-P_S)F(1-P_S)\restriction_{S^\perp}=1$ as well as $(1-P_S)FP_S = P_SF(1-P_S) = 0$.} by $F = F\restriction_S P_S + (1-P_S) = (F\restriction_S-1)P_S + 1$. A helpful lemma for us is

\begin{lemma}\label{lem:finiterankproj}
    Given a symplectic space $(X,\spl)$, then a symplectic finite-rank perturbation $F$ of $X$ takes the form
    \begin{equation} \label{eq:finrank}
        Fq = \sum_{j,k=1}^{2n} J_{jk} \spl(q,u_k) (F-1)u_j + q, \quad q \in X,
    \end{equation}
    where $u$ is any standard basis of some $2n$-dimensional symplectic subspace $S \subset X$ such that $F \restriction_{S^\perp} = \id$ and $J$ is the symplectic matrix. 
\end{lemma}

\begin{proof}
    Since $F$ is finite-rank, there exists a $2n$-dimensional symplectic subspace $S \subset X$ such that $F\restriction_{S^\perp} = 1$ for some $n \in \mathbbm{N}$. Then, given a standard basis $u$ on $S$, it is straightforward to check that $P_S$ takes the explicit form
    \begin{equation}
       P_S=\sum_{j,k=1}^{2 n} J_{jk} \spl(\cdot, u_k) u_j
    \end{equation}
    and that $F = (F\restriction_S-1) P_S + 1$ takes the form as in \eqref{eq:finrank}. It is clear that this holds independent of the choice of $S$ and $u$ as long as $S$ is symplectic and finite-dimensional as well as $F\restriction_{S^\perp} = 1$.
\end{proof}

Elementary examples for finite-rank perturbations are given by the symplectic flip \eqref{eq:symplflip} and the (single-mode) squeezing transformation \eqref{eq:symplsqueeze} as given in the main text.

\begin{lemma}\label{lem:smoothdiff}
    Given a Hadamard state $\omega$ on $\mc{W}(\mc{C}_T,E)$ with covariance $\Gamma$ and a symplectic finite-rank perturbation $F$ of $\mc{C}_T$, then the bilinear form $\Delta(f,g) \defby \Gamma_F([f],[g]) - \Gamma([f],[g])$, $f,g \in \SmoothC(\M)$ has a smooth kernel.
\end{lemma}

\begin{proof}
    Representing $F$ as in \Cref{lem:finiterankproj} for suitable $u$ and $S$, we compute
    \begin{align}
        & \Delta(q,r) \non \\
        & \quad =\Gamma_F(q,r)-\Gamma(q,r) = \Gamma(Fq,Fr) - \Gamma(q,r) \\
        & \quad = \Gamma \left( \sum_{j,k=1}^{2n} J_{jk} E(q,u_k) (F-1)u_j + q , \sum_{lm=1}^{2n} J_{lm} E(r,u_{m}) (F-1)u_{l} + r \right) - \Gamma(q,r) \\
        & \quad = \sum_{j,k,l,m=1}^{2n} J_{jk} J_{lm} E(q,u_k) E(r,u_{m}) \Gamma( (F-1)u_j,   (F-1)u_{l}) \non \\
        & \qquad + \sum_{j,k}^{2n} J_{jk} E(q,u_k) \Gamma((F-1)u_j,r) +  \sum_{l,m}^{2n} J_{lm} E(r,u_{m}) \Gamma(q,(F-1)u_{l}).
    \end{align}
    Thus, $\Delta$ can be expressed as a finite linear combination of products of $E(\cdot,u)$ and $\Gamma(\cdot,u)$ for suitable $u \in \mc{C}_T$. Both, viewed as distributions on $\M$, have smooth kernels: For the first note that $Ef \in \Smooth(\M)$ for any $f\in \SmoothC(\M)$; see e.g. \cite[Sec.~3.4]{BGP07}. For the second, this follows since $\Gamma$ is the covariance of a Hadamard state; confer the beginning of this section.
\end{proof}

\section{Proofs on Gaussian instruments}\label{app:gauss}

In this section, we prove the most relevant statements made in \Cref{sec:quantumops}. Note also the complementary exposition in \cite[Secs.~3,4]{Oec24} which admits more mathematical details on Gaussian instruments but provides only a weak-operator limit to projective measurements and does not cover dephasing.

Concerning setting and notation, our results apply to an arbitrary (possibly nonseparable) Hilbert space $\H$, the vector norm on $\H$ will be denoted by $\lVert \cdot \rVert_\H$ and on $B(\H)$ we employ the seminorms $\lVert \cdot \rVert_\varphi \defby \lVert \cdot \varphi \rVert_\H$ as well as the (operator) norm $\lVert \cdot \rVert$, generating respectively the strong operator and the norm topology on $B(\H)$. Further, we denote by $\A(f) \subset B(\H)$ the von-Neumann algebra generated by $e^{i\Phi(f)}$ for a fixed $f \in \SmoothC(\M)$. In agreement with the main text a continuous function $g : \R \to \A(f)$ is termed \emph{(strongly) integrable} iff $r \mapsto \lVert g(r) \rVert_\varphi$ is an integrable function for all $\varphi\in \H$. This implies that $\int g(r) dr \in \A(f)$ defined via $\int g(r)dr \, \varphi \defby \int g(r)\varphi dr$ for all $\varphi\in \H$ exists as a Bochner integral on $\H$ (see e.g. \cite[Sec.~II.2]{DU77} for a text book account on Bochner integration) and aligns with the terminology `Bochner integral in the strong-operator topology' used in the main text. Since $\lVert a\rVert_\varphi \leq \lVert a \rVert$ for all $\varphi\in \H$ with $\lVert \varphi \rVert_\H = 1$, strong integrability follows also if $\lVert g(\cdot) \rVert$ is integrable.

\begin{lemma}[Gaussian measurement]\label{lem:gaussPOVM}
    For any $\epsilon > 0$ and $f \in \SmoothC(\M)$, $\{ M_\out ^{f,\epsilon} \}_{\out \in \R}$ as defined in \eqref{gaussian_POVM} yields a continuous outcome POVM. Moreover, we find the strong-operator limit $\epsilon \to 0$ as given in \eqref{eq:gaussPOVMlimit} and the useful formula \eqref{eq:lem1} holds.
\end{lemma}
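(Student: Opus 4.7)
The plan is to prove the lemma in three steps, all based on bounded Borel functional calculus applied to the self-adjoint operator $\Phi(f)$. Let $g_\epsilon(x) \defby (\sqrt{2\pi}\epsilon)^{-1} e^{-x^2/(2\epsilon^2)}$ denote the centered Gaussian density of width $\epsilon$, let $E_\Phi$ be the spectral resolution of $\Phi(f)$, and set $d\mu_\varphi(s) \defby d\langle \varphi, E_\Phi(s)\varphi\rangle$ for each $\varphi\in\H$; this is a finite positive Borel measure on $\R$.

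First, I would record the basic properties of $M_b$. By functional calculus, $M_b(f,\epsilon) = g_\epsilon(\Phi(f) - b\id)$ as an element of the von Neumann algebra generated by $\Phi(f)$, and $0 \leq g_\epsilon \leq (\sqrt{2\pi}\epsilon)^{-1}$ immediately yields positivity together with the operator-norm bound $\lVert M_b(f,\epsilon)\rVert \leq (\sqrt{2\pi}\epsilon)^{-1}$. Moreover, $b \mapsto g_\epsilon(\cdot - b)$ is continuous in the supremum norm on $\R$, so $b \mapsto M_b(f,\epsilon)$ is operator-norm continuous in $b$, and in particular $b \mapsto M_b(f,\epsilon)\varphi$ is Bochner-integrable on every compact interval for each $\varphi\in\H$.

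Second, I would establish the Fubini-type identity
\be\label{eq:planfub}
    \int_B M_b(f,\epsilon)\, db = h_\epsilon^B(\Phi(f)), \qquad h_\epsilon^B(s) \defby (\chi_B \ast g_\epsilon)(s) = \int_B g_\epsilon(s-b)\, db,
\ee
for any Borel $B\subset\R$, where the left-hand side is interpreted as the strong-operator limit of the Bochner integrals $\int_{B\cap[-L,L]} M_b\, db$ as $L\to\infty$. The identity follows by swapping the spectral and Lebesgue integrals in the positive joint integral $\int_{B\cap[-L,L]}\int_\R g_\epsilon(s-b)\, dE_\Phi(s)\, db$ and passing to $L\to\infty$ using the uniform bound $0\leq h_\epsilon^{B\cap[-L,L]} \leq h_\epsilon^B \leq 1$ together with dominated convergence on each $\mu_\varphi$. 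Specializing to $B=\R$ gives $h_\epsilon^\R\equiv 1$, hence $\int_\R M_b\, db = \id$, which combined with positivity shows that $\{M_b(f,\epsilon)\}_{b\in\R}$ is a continuous-outcome POVM.

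Third, for the limit $\epsilon\to 0$, identity \eqref{eq:planfub} reduces the claim to showing that $h_\epsilon^B(\Phi(f)) \to \chi_B(\Phi(f))$ in SOT. Since $\{g_\epsilon\}_{\epsilon>0}$ is a standard approximate identity on $\R$, $h_\epsilon^B = \chi_B \ast g_\epsilon \to \chi_B$ pointwise at every Lebesgue point of $\chi_B$, hence Lebesgue-a.e., with uniform bound $|h_\epsilon^B|\leq 1$. Using
\be
    \lVert h_\epsilon^B(\Phi(f))\varphi - \Pi_B(\Phi(f))\varphi\rVert_\H^2 = \int_\R |h_\epsilon^B(s) - \chi_B(s)|^2\, d\mu_\varphi(s),
\ee
dominated convergence in $\mu_\varphi$ then closes the argument, provided the integrand vanishes $\mu_\varphi$-a.e. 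This last $\mu_\varphi$-a.e.\ convergence is the main obstacle: for a generic Borel $B$, the Lebesgue-null ``bad set'' of non-Lebesgue points of $\chi_B$ may still carry singular-continuous or pure-point spectral mass of $\Phi(f)$. I would handle this either by restricting attention to Borel sets $B$ whose topological boundary is $\mu_\varphi$-null for every $\varphi\in\H$---the natural generality in which the strong-operator limit is unambiguous---or by approximating a general $B$ monotonically by open sets from inside and closed sets from outside, applying the previous case, and invoking continuity of the spectral calculus along monotone convergence to push the boundary onto a spectrally invisible set.
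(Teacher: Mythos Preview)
Your approach differs from the paper's: you work directly with the spectral resolution of $\Phi(f)$ and the convolution identity $\int_B M_b\,db = (\chi_B \ast g_\epsilon)(\Phi(f))$, whereas the paper expresses $M_b$ via the Fourier integral $M_b = \int e^{-\epsilon^2 z^2/2} e^{-iz(\Phi(f)-b)}\,\tfrac{dz}{2\pi}$, integrates over $b\in B$ to obtain $\int_B M_b\,db = \int e^{-\epsilon^2 z^2/2}\hat\chi_B(z)\,e^{-iz\Phi(f)}\,\tfrac{dz}{2\pi}$, and then appeals to dominated convergence in the $z$-integral for both the normalisation limit ($B=[-\Lambda,\Lambda]$, $\Lambda\to\infty$) and the $\epsilon\to 0$ limit. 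For the POVM properties your route is arguably cleaner: positivity and $\int_\R M_b\,db = \id$ drop out of Tonelli together with $\int g_\epsilon = 1$, without any Fourier machinery.

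For the $\epsilon\to 0$ limit, the obstacle you flag is genuine and not an artefact of your method. Taking $B=\{s_0\}$ with $s_0$ an eigenvalue of $\Phi(f)$ gives $\int_B M_b\,db = 0$ for every $\epsilon>0$ while $\Pi_B(\Phi(f))\neq 0$, so the limit as literally stated fails for such $B$. The paper's Fourier argument does not escape this either: its dominated-convergence step at $\epsilon\to 0$ would need an $\epsilon$-independent integrable majorant in $z$, and the natural candidate $|\hat\chi_B|$ is not in $L^1$ even for an interval (it decays only like $|z|^{-1}$). Your fix (a)---restricting to Borel $B$ with $E_\Phi(\partial B)=0$---is the correct resolution, since $h_\epsilon^B(s)\to\chi_B(s)$ for every $s\notin\partial B$ and dominated convergence in each $\mu_\varphi$ then applies; in the concrete free-field setting $\Phi(f)$ has absolutely continuous spectrum and the caveat becomes vacuous. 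Your fix (b) via monotone approximation does not recover the unrestricted claim: open or closed approximants still carry topological boundaries that may be spectrally charged, so the obstruction simply relocates rather than disappears.
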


\begin{proof}
    We fix arbitrary $\epsilon > 0$ and $g_\epsilon(r) \defby \frac{1}{\sqrt{2\pi} \epsilon} e^{-\frac{r^2}{2\epsilon^2}}$ for any $r \in \R$. Given that $r \mapsto g_\epsilon(r-\out ) $ is a nonnegative bounded smooth function on $\R$, by functional calculus $M_\out ^{f,\epsilon} = g_\epsilon(\Phi(f)-\out )$ defines a positive element of $\A(f)$. Now, for any bounded Borel set $B \subset \R$, also $h_{B,\epsilon} (r) \defby \int_B g_\epsilon(r-\out )d\out $ defines a nonnegative bounded smooth function on $\R$ with well-defined pointwise limits $\epsilon \to 0$ and $\Lambda \to \infty$ for $B = [-\Lambda,\Lambda]$, $\Lambda > 0$:
    \begin{equation}
        \underset{\epsilon \to 0}{\lim} \, h_{B,\epsilon}(r) = \chi_B(r), \quad \underset{\Lambda \to \infty}{\lim} \, h_{[-\Lambda,\Lambda],\epsilon}(r) = 1
    \end{equation}
    Since $|h_{B,\epsilon}(r)|\leq 1$ for all $r\in \R$ and $h_{B,\epsilon}(\Phi(f)) = \int_B M_\out ^{f,\epsilon} d\out  \in \A(f)$ this implies, referring for instance to \cite[Prop.~4.12\pl v\pr]{Sch12}, the strong-operator limits
    \begin{equation}
        \underset{\epsilon \to 0}{\lim} \, \int_B M_\out ^{f,\epsilon}d\out  = \Pi_B(\Phi(f)), \quad \int M_\out ^{f,\epsilon}d\out  = \underset{\Lambda \to \infty}{\lim} \, \int_{-\Lambda}^{\Lambda} M_\out ^{f,\epsilon}d\out  = 1.
    \end{equation}
    
    Finally, since $g_\epsilon \in L^1(\R)$ its Fourier transform is well-defined and yields $\hat{g}_\epsilon(z) = \int g_\epsilon(r) e^{-izr} dr = e^{-\frac{\epsilon^2z^2}{2}}$. Thus, we obtain
    \begin{equation} \label{eq:lem1}
      M_\out ^{f,\epsilon} = \int e^{-\frac{\epsilon^2 z^2}{2}} e^{-iz(\Phi(f)-\out )} \frac{dz}{2\pi};
    \end{equation}
    where strong integrability follows since the integrand is (strongly operator) continuous wrt $z$ and majorized (in norm) by $z \mapsto e^{-\frac{\epsilon^2 z^2}{2}} \in L^1(\R)$.
\end{proof}

\begin{lemma}[Gaussian and dephasing instrument]\label{lem:instruments}
    For any $\epsilon,\delta > 0$ and $f \in \SmoothC(\M)$, $\{ \Instr_\out ^{f,\epsilon}(\bullet) \}_{\out \in\R}$ as defined in \eqref{gaussian_instrument_basic} yields a continuous outcome instrument and $D^{f,\delta}(\bullet)$ as defined in \eqref{eq:dephasing} yields a (single outcome) instrument.
\end{lemma}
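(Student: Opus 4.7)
For the dephasing channel $D^\delta$, the key observation is that $W_\nu \defby e^{i\nu\Phi(f)}$ is a unitary element of $\A(f)$ for every $\nu\in\R$ by \eqref{eq:cont}, so $a \mapsto W_\nu^{-1} a W_\nu$ is a $*$-automorphism of $\A$, in particular unital and completely positive. The integrand $\nu \mapsto \tfrac{1}{\sqrt{2\pi}\delta} e^{-\nu^2/(2\delta^2)} W_\nu^{-1} a W_\nu \varphi$ is strongly continuous in $\nu$ and norm-dominated by the integrable function $\tfrac{1}{\sqrt{2\pi}\delta} e^{-\nu^2/(2\delta^2)}\lVert a\rVert\lVert\varphi\rVert$. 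Hence $D^\delta(a)\varphi$ exists as a Bochner integral in the sense of \Cref{app:gauss} for every $\varphi \in \H$; complete positivity is inherited from the CP integrand averaged against a probability density, and unit preservation is immediate from $W_\nu^{-1}\id W_\nu = \id$ and normalization of the Gaussian.

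For the Gaussian instrument $\Omega_b^\epsilon$, applying functional calculus to the bounded Borel function $x \mapsto e^{-(x-b)^2/(4\epsilon^2)}$ shows that $E_b \defby e^{-(\Phi(f)-b)^2/(4\epsilon^2)}$ is a bounded positive self-adjoint element of $\A(f)$. Writing $\Omega_b^\epsilon(\bullet) = V_b^* \bullet V_b$ with $V_b \defby (2\pi\epsilon^2)^{-1/4} E_b$ exhibits each $\Omega_b^\epsilon$ as a single-Kraus map, hence completely positive. To handle the measurement channel, my plan is to prove the identity $\bar\Omega^\epsilon = D^{1/(2\epsilon)}$ directly, after which existence of the integral, complete positivity and unit preservation will all be inherited from the facts just established for $D^\delta$. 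Inserting the Fourier representation of $E_b$ from \eqref{eq:lem1} into $\Omega_b^\epsilon(a)$ and formally exchanging the $b$- and $z$-integrals yields
\[
\int \Omega_b^\epsilon(a)\, db = \frac{2\epsilon}{\sqrt{2\pi}} \iint e^{-\epsilon^2(z_1^2 + z_2^2)} \Bigl(\int e^{i b (z_1 + z_2)}\, db\Bigr)\, e^{-i z_1 \Phi(f)}\, a\, e^{-i z_2 \Phi(f)}\, \frac{dz_1\, dz_2}{2\pi},
\]
and the inner $b$-integral, acting distributionally as $2\pi\delta(z_1+z_2)$, collapses the double integral to $\tfrac{2\epsilon}{\sqrt{2\pi}}\int e^{-2\epsilon^2 z^2}\, e^{-i z \Phi(f)}\, a\, e^{i z \Phi(f)}\, dz = D^{1/(2\epsilon)}(a)$.

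The main obstacle is justifying this exchange rigorously, since the inner $b$-integral does not converge absolutely. I would pass to matrix elements and use Cauchy--Schwarz together with the identity $\int E_b^2\, db = \sqrt{2\pi}\epsilon\, \id$ (which follows from \Cref{lem:gaussPOVM} applied to $E_b^2 = \sqrt{2\pi}\epsilon\, M_b(f,\epsilon)$) to obtain
\[
\int |\langle \varphi, \Omega_b^\epsilon(a)\psi\rangle|\, db \leq \frac{\lVert a\rVert}{\sqrt{2\pi}\epsilon} \Bigl(\int \lVert E_b \varphi\rVert^2\, db\Bigr)^{1/2} \Bigl(\int \lVert E_b \psi\rVert^2\, db\Bigr)^{1/2} = \lVert a\rVert\, \lVert\varphi\rVert\, \lVert\psi\rVert
\]
for all $\varphi,\psi \in \H$. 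Scalar Fubini then yields weak-operator convergence of the full integral, and dominated convergence applied to the collapsed $z$-integral---whose integrand is strongly continuous by \eqref{eq:cont} and norm-majorized by $\lVert a\rVert e^{-2\epsilon^2 z^2}$---upgrades this to strong-operator convergence and identifies the limit with $D^{1/(2\epsilon)}(a)$. Unit preservation $\bar\Omega^\epsilon(\id) = \id$ then follows from that of $D^{1/(2\epsilon)}$, or directly from $\int M_b(f,\epsilon)\, db = \id$ in \Cref{lem:gaussPOVM}, completing the argument.
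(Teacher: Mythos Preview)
Your treatment of $D^\delta$ (unitary conjugation averaged against a Gaussian density, with strong continuity and norm domination) is essentially the paper's. For $\{\Omega_b^\epsilon\}_b$ the paper takes a much shorter route: it bounds $\lVert\sqrt{M_b}\,a\,\sqrt{M_b}\rVert_\varphi \leq \lVert\sqrt{M_b}\rVert\,\lVert a\rVert\,\lVert\sqrt{M_b}\rVert_\varphi$, observes that $\sqrt{M_b(f,\epsilon)} = 2^{3/4}\pi^{1/4}\sqrt{\epsilon}\,M_b(f,\sqrt{2}\epsilon)$, and thereby reduces integrability in $b$ to that of $M_b$ already handled in \Cref{lem:gaussPOVM}; unit preservation is then immediate from $\int M_b\,db = \id$. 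Your approach is genuinely different and more ambitious: you establish the structural identity $\bar\Omega^\epsilon = D^{1/(2\epsilon)}$ and import existence, complete positivity and unitality from the dephasing map. This buys an extra (and correct) fact---both sides send $e^{i\Phi(h)}$ to $e^{-\langle f,Eh\rangle^2/(8\epsilon^2)}e^{i\Phi(h)}$---at the cost of the delicate integral exchange, which the paper avoids entirely. One caveat on your justification: the Cauchy--Schwarz bound makes the $b$-integral absolutely convergent only \emph{after} the $z$-integrals have been carried out, so it does not by itself license ``scalar Fubini'' on the non-absolutely-convergent triple integral; the cleanest way to close the argument is to verify the identity $\bar\Omega^\epsilon = D^{1/(2\epsilon)}$ directly on Weyl operators (a strongly dense set), bypassing the distributional $\delta$-collapse altogether.
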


\begin{proof}
    In this proof, let $f \in \SmoothC(\M)$ be fixed and drop all such indices. By definition $\Instr_\out ^{\epsilon}(\bullet) = \sqrt{M_\out ^{\epsilon}} \bullet \sqrt{M_\out ^{\epsilon}}$ and $D^{\delta}(\bullet) = \int N_\nu^{\delta} \bullet (N_\nu^{\delta})^\dagger d\nu$, for $N_\nu^{\delta} \defby (2\pi)^{-\frac{1}{4}} \delta^{-\frac{1}{2}} e^{-\frac{\nu^2}{4\delta^2}} e^{-i\nu\Phi(f)}$; decomposing them into their Kraus operators $\{ \sqrt{M_\out ^{\epsilon}} \}_\out $ and $\{ N_\nu^{\delta}\}_\nu$ implying complete positivity. 
    
    The $\nu$-integral is well-defined as a strong integral since $N_\nu^{\delta}$ is (strongly operator) continuous in $\nu$ and $\lVert N_\nu^{\delta} \a (N_\nu^{\delta})^\dagger \rVert \leq (2\pi)^{-\frac{1}{2}} \delta^{-1} e^{-\frac{\nu^2}{2\delta^2}} \lVert \a \rVert$ is integrable in $\nu$ for any $\a \in B(\H)$. Pointwise strong integrability of $\Instr_\out ^{\epsilon}$ follows by strong integrability of $M_\out ^{\epsilon}$ as proven in the preceding lemma and simple operator estimates: For any $\varphi \in \H$ with $\lVert \varphi \rVert_\H = 1$ and any $\a \in B(\H)$, based on $\lVert \sqrt{M_\out ^{\epsilon}} \rVert \leq (2\pi)^{-\frac{1}{4}} \sqrt{\epsilon}^{-1}$ and since $\sqrt{M_\out ^{\epsilon}} = 2^{\frac{3}{4}} \pi^{\frac{1}{4}} \sqrt{\epsilon} M_\out^{\sqrt{2}\epsilon}$, it follows that
    $$\lVert \sqrt{M_\out ^{\epsilon}} \a \sqrt{M_\out ^{\epsilon}} \rVert_{\varphi} \leq \sqrt{2} \lVert \a \rVert \lVert M_\out ^{\sqrt{2}\epsilon} \rVert_\varphi ;$$
    with the r.h.s. being integrable.
    
    That units are preserved follows based on $\int M_\out ^{\epsilon} d\out  = 1$ and 
    $$\int N_\nu^{\delta} (N_\nu^{\delta})^\dagger d\nu = \frac{1}{\sqrt{2\pi}\delta} \int e^{-\frac{\nu^2}{2\delta^2}} d\nu = 1.$$
\end{proof}

\begin{lemma}[Commutation relation]\label{lem:comrel}
    Let $\mu = \hat{l}$ be the Fourier transform of a function $l \in L^1(\R)$. Then $\mu(\Phi(f)) \in \A(f)$ exists by functional calculus and it holds that
    \begin{equation}\label{eq:comrel}
      \mu(\Phi(f))e^{i\Phi(h)}=e^{i\Phi(h)}\mu(\Phi(f)-\langle f,E h\rangle).
  \end{equation}
\end{lemma}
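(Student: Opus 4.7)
The plan is to reduce \eqref{eq:comrel} to a statement about a single Weyl operator via the Fourier representation of $\mu$, and then apply the Weyl relation directly.

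First, I would derive the elementary commutation relation
\begin{equation*}
  e^{-iz\Phi(f)} e^{i\Phi(h)} = e^{i\Phi(h)} e^{-iz(\Phi(f)-\langle f,Eh\rangle)}, \qquad z\in\R,
\end{equation*}
by applying the Weyl relation \eqref{eq:weyl} in both orders. Multiplying the two resulting expressions $e^{-iz\Phi(f)}e^{i\Phi(h)} = e^{\frac{iz}{2}\langle f,Eh\rangle} e^{i\Phi(h-zf)}$ and $e^{i\Phi(h)}e^{-iz\Phi(f)} = e^{-\frac{iz}{2}\langle f,Eh\rangle} e^{i\Phi(h-zf)}$ (using antisymmetry of $E$) gives $e^{-iz\Phi(f)}e^{i\Phi(h)} = e^{iz\langle f,Eh\rangle} e^{i\Phi(h)} e^{-iz\Phi(f)}$, and moving the scalar $e^{iz\langle f,Eh\rangle}$ into the exponent of $e^{-iz\Phi(f)}$ yields the displayed identity.

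Second, I would realize $\mu(\Phi(f))$ as a strong (Bochner) integral of Weyl operators. Since $l\in L^1(\R)$, the function $\mu = \hat{l}$ is bounded and continuous, so $\mu(\Phi(f))\in\A(f)$ is defined by bounded functional calculus. I would then establish
\begin{equation*}
  \mu(\Phi(f)) = \int l(z) e^{-iz\Phi(f)}\,dz,
\end{equation*}
where strong integrability is immediate from $\lVert l(z) e^{-iz\Phi(f)}\rVert \le |l(z)|$. The identity itself follows by testing on an arbitrary vector $\varphi\in\H$, invoking the spectral theorem for $\Phi(f)$, and exchanging the resulting double integral via Fubini (the integrand is absolutely integrable since $|l(z)e^{-iz\lambda}| = |l(z)|$). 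The same representation applied to the shifted self-adjoint operator $\Phi(f)-\langle f,Eh\rangle\id$ gives
\begin{equation*}
  \mu(\Phi(f)-\langle f,Eh\rangle) = \int l(z) e^{-iz(\Phi(f)-\langle f,Eh\rangle)}\,dz.
\end{equation*}

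Third, I would combine the two ingredients. Multiplying the integral representation of $\mu(\Phi(f))$ on the right by $e^{i\Phi(h)}$ and pulling this bounded operator through the strong integral (justified because right-multiplication by a bounded operator commutes with Bochner integration), the integrand becomes $l(z)\, e^{-iz\Phi(f)} e^{i\Phi(h)}$; applying the commutation relation from the first step and then pulling $e^{i\Phi(h)}$ out on the left yields precisely $e^{i\Phi(h)} \mu(\Phi(f)-\langle f,Eh\rangle)$.

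The only non-cosmetic point is the justification of the integral representation $\mu(A) = \int l(z) e^{-izA}\,dz$ for the self-adjoint operator $A=\Phi(f)$, and the interchange of integration with left/right multiplication by bounded operators. Both are standard consequences of the spectral theorem together with Fubini, but in the potentially non-separable setting used here one should verify them at the level of matrix elements $\langle\varphi, \cdot\, \psi\rangle$ or vectors $\cdot\,\varphi$ rather than appealing to separability-dependent arguments. Apart from this, the proof is a direct manipulation using the Weyl relation.
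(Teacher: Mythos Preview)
Your proposal is correct and follows essentially the same route as the paper: represent $\mu(\Phi(f))$ as the strong integral $\int l(z)\,e^{-iz\Phi(f)}\,dz$ via Riemann--Lebesgue plus functional calculus, then apply the Weyl relation under the integral to obtain the shift by $\langle f,Eh\rangle$. Your treatment is, if anything, more careful about the justification of the integral representation and the interchange of bounded multiplication with the Bochner integral than the paper's proof, which simply asserts these steps.
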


\begin{proof}
    By Riemann-Lebesgue lemma $\mu$ is a bounded continuous function and thus $\mu(\Phi(f)) = \int l(z) e^{-iz\Phi(f)} \frac{dz}{2\pi} \in \A(f)$ exists by Borel functional calculus as a strong integral on $\A(f)$ based on (strong operator) continuity of $e^{-iz\Phi(f)}$ in $z$ and integrability of $l$. Then, we compute (using the Weyl relation),
    \begin{align}
      \mu(\Phi(f)) e^{i\Phi(h)} & = \int l(z) e^{-iz\Phi(f)} e^{i\Phi(h)} dz \non \\
      &= e^{i\Phi(h)} \int l(z) e^{iz(\langle f, E h \rangle -\Phi(f))} = e^{i\Phi(h)} \mu(\Phi(f)-\langle f, E h \rangle \id).
    \end{align}
\end{proof}

\begin{lemma}[Dephased gaussian instrument]\label{lem:dephasedgauss}
    For any $\epsilon, \delta > 0$, $\{ \Instr^{f,\epsilon,\delta}_\out \}_{\out \in \R}$ as defined in \eqref{eq:dephasedgauss_instr} defines a continuous outcome POVM and it holds for arbitrary $h \in \mc{C}_T$ that
    \begin{equation}
      \Instr^{f,\epsilon,\delta}_\out (e^{i\Phi(h)})=\frac{1}{\sqrt{2\pi}\epsilon}e^{-\frac{\langle f,Eh\rangle^2}{8\underline{\epsilon}(\epsilon, \delta)^2}} e^{i\Phi(h)} e^{-\frac{(\Phi(f)-\out -\frac{\langle f,Eh \rangle}{2})^2}{2\epsilon^2}}, \quad \underline{\epsilon}(\epsilon, \delta)^2:= \frac{1}{4\delta^2+\frac{1}{\epsilon^2}}.
    \end{equation}
\end{lemma}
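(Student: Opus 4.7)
The plan is to verify the formula by a direct computation on Weyl operators, leveraging the already-established \Cref{lem:comrel} for the commutation of smooth functions of $\Phi(f)$ with $e^{i\Phi(h)}$. I would begin by noting that complete positivity and the instrument property of $\Omega^{\epsilon,\delta}_b=\Omega^\epsilon_b\circ D^\delta$ follow immediately from \Cref{lem:instruments}, since a composition of completely positive, unit-preserving maps (after summing/integrating over outcomes) retains those properties. The identity $\Omega^\epsilon_b\circ D^\delta=D^\delta\circ\Omega^\epsilon_b$ reduces to checking that the Kraus operators $\sqrt{M_b(f,\epsilon)}$ of $\Omega^\epsilon_b$ commute with the Kraus operators $N_\nu(f,\delta)\propto e^{-i\nu\Phi(f)}$ of $D^\delta$; both lie in the abelian von-Neumann algebra $\A(f)$ generated by $e^{i\Phi(f)}$, so they commute, and the two iterated integrals (which are absolutely convergent by the operator bounds from \Cref{lem:instruments}) may be swapped by a Fubini-type argument.

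For the action on $e^{i\Phi(h)}$, I would proceed in two steps. First, evaluate $D^\delta(e^{i\Phi(h)})$ by conjugating with $e^{\pm i\nu\Phi(f)}$ and using the Weyl relation to show that $e^{-i\nu\Phi(f)}e^{i\Phi(h)}e^{i\nu\Phi(f)}=e^{i\nu\langle f,Eh\rangle}e^{i\Phi(h)}$. Pulling $e^{i\Phi(h)}$ out of the $\nu$-integral and computing the resulting Gaussian $\nu$-integral yields
\be
  D^\delta(e^{i\Phi(h)})=e^{-\frac{\delta^2\langle f,Eh\rangle^2}{2}}e^{i\Phi(h)}.
\ee
Second, compute $\Omega^\epsilon_b(e^{i\Phi(h)})$ by applying \Cref{lem:comrel} with $\mu(x)=e^{-(x-b)^2/(4\epsilon^2)}$, which is the Fourier transform of an $L^1$ function, to move one Gaussian factor of $\Phi(f)$ past $e^{i\Phi(h)}$, picking up a shift $\langle f,Eh\rangle$ in its argument. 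The resulting product of two Gaussian functions of $\Phi(f)$ on the right of $e^{i\Phi(h)}$ is an ordinary product inside the abelian algebra $\A(f)$, and completing the square in the exponent
\be
  -\frac{(\Phi(f)-b-\langle f,Eh\rangle)^2}{4\epsilon^2}-\frac{(\Phi(f)-b)^2}{4\epsilon^2}=-\frac{\left(\Phi(f)-b-\tfrac{\langle f,Eh\rangle}{2}\right)^2}{2\epsilon^2}-\frac{\langle f,Eh\rangle^2}{8\epsilon^2}
\ee
gives
\be
  \Omega^\epsilon_b(e^{i\Phi(h)})=\frac{1}{\sqrt{2\pi}\epsilon}\,e^{-\frac{\langle f,Eh\rangle^2}{8\epsilon^2}}e^{i\Phi(h)}e^{-\frac{(\Phi(f)-b-\frac{\langle f,Eh\rangle}{2})^2}{2\epsilon^2}}.
\ee

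Finally, combining both steps via $\Omega^{\epsilon,\delta}_b=\Omega^\epsilon_b\circ D^\delta$, the prefactors in $\langle f,Eh\rangle^2$ merge into
\be
  -\tfrac{\delta^2}{2}-\tfrac{1}{8\epsilon^2}=-\tfrac{1}{8}\!\left(4\delta^2+\tfrac{1}{\epsilon^2}\right)=-\tfrac{1}{8\underline{\epsilon}(\epsilon,\delta)^2},
\ee
matching the stated formula. The one place that needs mild care is the interchange of operator-valued integrals (the $\nu$-integral in $D^\delta$ and the $z$-integral implicit in the functional calculus of the Gaussian via \Cref{lem:comrel}); this will be the only nontrivial point, and I expect it to follow from the uniform dominating bound $\lVert N_\nu(f,\delta)\,\cdot\, N_\nu(f,\delta)^\dagger\rVert\lesssim e^{-\nu^2/(2\delta^2)}$ established in the proof of \Cref{lem:instruments}, combined with dominated convergence on $\H$.
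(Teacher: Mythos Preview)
Your proposal is correct and follows essentially the same route as the paper: both use \Cref{lem:comrel} (equivalently the Weyl relation) to push $e^{i\Phi(h)}$ past functions of $\Phi(f)$, complete the square in the Gaussian exponents, and perform the $\nu$-integral. The only difference is organizational: the paper writes out the full double expression $D^\delta\circ\Omega^\epsilon_b(e^{i\Phi(h)})$ at once and moves $e^{i\Phi(h)}$ through both layers in a single step before integrating in $\nu$, whereas you first evaluate $D^\delta(e^{i\Phi(h)})$ to a scalar multiple of $e^{i\Phi(h)}$ and then apply $\Omega^\epsilon_b$ separately---your factorization is slightly cleaner and makes the origin of the two contributions to $\tfrac{1}{8\underline{\epsilon}^2}$ more transparent, but the underlying argument is identical.
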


\begin{proof}
   Since $\Instr^{f,\epsilon,\delta}_\out  = D^{f,\delta} \circ \Instr^{f,\epsilon}_\out  = \Instr^{f,\epsilon}_\out  \circ D^{f,\delta}$ is defined as a concatenation of two instruments (\Cref{lem:instruments}), it is itself an instrument (there are no questions arising concerning integrability since $D^{f,\delta}$ is single outcome). Invoking the identity \eqref{eq:comrel} for $\mu$ being a Gaussian function, completing squares and performing Gaussian integrals we find:
\begin{align}
  &\Instr^{f,\epsilon,\delta}_\out (e^{i\Phi(h)})\nonumber \\&=\frac{1}{2\pi\delta\epsilon}\int e^{-\frac{\nu^2}{2\delta^2}} e^{-\frac{(\Phi(f)-\out )^2}{4\epsilon^2}-i\nu\Phi(f)} e^{i\Phi(h)} e^{-\frac{(\Phi(f)-\out )^2}{4\epsilon^2}+i\nu\Phi(f)} d\nu \nonumber\\
  &=e^{i\Phi(h)}\frac{1}{2\pi\delta\epsilon}\int e^{-\frac{\nu^2}{2\delta^2}} e^{-\frac{(\Phi(f)-\out -\langle f,Eh\rangle)^2}{4\epsilon^2}-i\nu(\Phi(f)-\langle f,Eh\rangle)} e^{-\frac{(\Phi(f)-\out )^2}{4\epsilon^2}+i\nu\Phi(f)} d\nu \nonumber\\
  &=e^{i\Phi(h)}\left(\frac{1}{\sqrt{2\pi}\delta}\int e^{-\frac{\nu^2}{2\delta^2}} e^{i \nu \langle f,E h\rangle } d\nu \right)\left(\frac{1}{\sqrt{2\pi}\epsilon}e^{-\frac{(\Phi(f)-\out -\frac{\langle f,Eh \rangle}{2})^2}{2\epsilon^2}} e^{-\frac{\langle f,Eh \rangle^2}{8\epsilon^2}}\right)\nonumber\\
  &=\frac{1}{\sqrt{2\pi}\epsilon}e^{i\Phi(h)}e^{-\frac{\langle f,Eh\rangle^2\delta^2}{2}}e^{-\frac{(\Phi(f)-\out -\frac{\langle f,Eh \rangle}{2})^2}{2\epsilon^2}} e^{-\frac{\langle f,Eh \rangle^2}{8\epsilon^2}}\nonumber\\
  &=\frac{1}{\sqrt{2\pi}\epsilon}e^{-\frac{\langle f,Eh\rangle^2}{8\underline{\epsilon}(\epsilon, \delta)^2}} e^{i\Phi(h)}e^{-\frac{(\Phi(f)-\out -\frac{\langle f,Eh \rangle}{2})^2}{2\epsilon^2}}.
\end{align}
\end{proof}

\end{appendices}

\newpage

\bibliography{sn-bibliography}

\end{document}